\newtheorem{theorem}{Theorem}
\newtheorem{remark}{Remark}
\newtheorem{lemma}{Lemma}
\newtheorem{proposition}{Proposition}
\begin{document}

\title{Two new classes of quantum MDS codes}
\author{\small  Weijun Fang\thanks{Corresponding Author} $^{1}$  \   \ Fang-Wei Fu$^{1}$ \\
\small $^1$  Chern Institute of Mathematics and LPMC, Nankai University, Tianjin 300071, China\\
\small Email: nankaifwj@163.com, fwfu@nankai.edu.cn\\
}
\date{}
\maketitle
\thispagestyle{empty}
\begin{abstract}
Let $p$ be a prime and let $q$ be a power of $p$. In this paper, by using generalized Reed-Solomon (GRS for short) codes  and extended GRS codes, we construct two new classes of quantum maximum-distance-
separable (MDS) codes with parameters
\[ [[tq, tq-2d+2, d]]_{q} \]
for any $1 \leq t \leq q, 2 \leq d \leq \lfloor \frac{tq+q-1}{q+1}\rfloor+1$, and
\[ [[t(q+1)+2, t(q+1)-2d+4, d]]_{q} \]
for any $1 \leq t \leq q-1, 2 \leq d \leq t+2$ with $(p,t,d) \neq (2, q-1, q)$. Our quantum MDS codes have flexible parameters, and have minimum distances larger than $\frac{q}{2}+1$ when $t > \frac{q}{2}$. Furthermore, it turns out that our constructions generalize and improve some previous results.
\end{abstract}

\small\textbf{Keywords:} quantum MDS codes, Hermitian construction, generalized Reed-Solomon codes, extended generalized Reed-Solomon codes

\maketitle

%-------------------------------------------------------------------------------
\section{Introduction}

Quantum information and quantum computation have become a hot topic in recent decades. Quantum error-correcting codes are useful and have many applications in quantum computations and quantum communications. In \cite{4}, Calderbank et al. presented an effective mathematical method to construct nice quantum codes
from classical self-orthogonal codes over $\mathbb{F}_{2}$ or $\mathbb{F}_{4}$. Ashikhmin et al. then generalized this to the nonbinary case in \cite{2,5}. Since then, many good quantum codes have
been constructed by classical linear codes with certain self-orthogonality (see \cite{1,3,12,18,20,22}).

Let $q$ be a prime power. A $q$-ary quantum code $Q$ of
length $n$ is just defined as a subspace of the Hibert space $(\mathbb{C}^{q})^{\bigotimes n}$. The size of $Q$ is $K=\textnormal{dim}_{\mathbb{C}}Q$. We use $[[n, k, d]]_{q}$ to denote a $q$-ary quantum code of length
$n$ with size $q^{k}$ and minimum distance $d$. As in classical coding theory, one central theme in quantum
error-correction is the construction of codes that have reasonable good parameters. Similar to
the classical Singleton bound, the parameters of an $[[n, k, d]]_{q}$ quantum code have to satisfy the quantum Singleton bound:
$k \leq n - 2d + 2.$ (see \cite{5}). A quantum code achieving this quantum Singleton
bound is called a quantum maximum-distance-separable (MDS) code.

In the past decades, there have been a lot of research on the construction of
quantum MDS codes. It was proved in \cite{24} that the length of a
$q$-ary quantum stabilizer MDS code is at most $q^{2} + 1$ if the classical MDS conjecture holds. The problem of constructing q-ary quantum
MDS codes with $n \leq q + 1$ has been completely solved in \cite{18,20}. However, it is not an easy task to construct
quantum MDS codes with length $n > q +1$ and minimum distance $d > \frac{q}{2}+1$. Researchers
have made a great effort to construct such quantum MDS codes via  negacyclic codes (see \cite{16}), constacyclic codes (see \cite{10, 13, 15, 17, 21}), Pseudo-cyclic Codes (see \cite{19}) and generalized Reed-Solomon codes (see \cite{6,7,8,9,11,23}). We only list the known results for the constructions of $q$-ary quantum MDS codes with length $n=q^{2}$ or $q^{2}+1$ in Table 1.
\begin{table}[htbp]
\centering
\caption{Known Results of $q$-ary Quantum MDS Codes}

\vspace{3mm}

\begin{tabular}{|c|c|c|}
  \hline
  % after \\: \hline or \cline{col1-col2} \cline{col3-col4} ...
  Length $n$ & Minimum distance $d$ & Reference\\
  \hline
  $n=q^{2}$ & $d \leq q$  & \cite{8}\\
  \hline
  \multirow{3}{*}{$n=q^{2}+1$} & $d=q+1$  &\cite{25}\\ \cline{2-3}
   & $d \leq q+1$, $q$ is even and $d$ is odd  &\cite{3}\\ \cline{2-3}
   & $d \leq q+1$, $q \equiv 1\textnormal{ (mod 4)}$ and $d$ is even &\cite{16} \\
  \hline
\end{tabular}
\end{table}

In this paper, by using the classical Hermitian self-orthogonal GRS codes and extended GRS codes, we present two new classes of quantum MDS codes. These quantum MDS codes have relatively large minimum distance in some certain range. More precisely, we construct quantum MDS codes with following parameters:
\begin{itemize}
  \item (1) $[[tq, tq-2d+2, d]]_{q}$, for any $1 \leq t \leq q$, $2 \leq d \leq \lfloor \frac{tq+q-1}{q+1}\rfloor+1$;
  \item (2) $[[t(q+1)+2, t(q+1)-2d+4, d]]_{q}$, for any $1 \leq t \leq q-1, 2 \leq d \leq t+2$ with $(p,t,d) \neq (2, q-1, q)$.
\end{itemize}
It is easy to see that the minimum distance of our quantum MDS codes can be larger than $\frac{q}{2}+1$ when $t > \frac{q}{2}$. Moreover, taking $t=q$ in (1), then $n=q^{2}$ and $d \leq q$, thus we obtain one of the main results in \cite{8}; Taking $t=q-1$ in (2), then $n=q^{2}+1$ and $d \leq q+1$ ($d \neq q$ if $q$ is even), which improves the results in \cite{3,16,25}.

The remainder of the paper is organized as follows. In Section 2, We state some basic notations
of quantum codes and Hermitian construction. We review and give some results about GRS codes and extended GRS codes in Section 3. Two new classes of quantum MDS codes are construted in Section 4. Finally, in Section 5 we use some conclusions to end this paper.

\section{Hermitian Construction}
Let $q$ be a prime power and let $\mathbb{F}_{q^{2}}$ be the finite field with $q^{2}$ elements. For any two vectors $\textbf{u}=(u_{1}, u_{2}, \ldots, u_{n}), \textbf{v}=(v_{1}, v_{2}, \ldots, v_{n}) \in \mathbb{F}_{q^{2}}^{n},$ we define their Euclidean and Hermitian inner product as
\[ \langle \textbf{u}, \textbf{v}\rangle :=\sum_{i=1}^{n}u_{i}v_{i} \]
and
\[ \langle \textbf{u}, \textbf{v}\rangle_{H} :=\sum_{i=1}^{n}u_{i}v_{i}^{q}, \]
respectively.

For an $\mathbb{F}_{q^{2}}$-linear code $\mathcal{C}$ of length $n$ and dimension $k$, the Euclidean and Hermitian dual code of $\mathcal{C}$ are defined as
\[ \mathcal{C}^{\perp} :=\{\textbf{u} \in \mathbb{F}_{q^{2}}^{n} :\langle \textbf{u}, \textbf{v}\rangle=0 \textnormal{ for all } \textbf{v} \in \mathcal{C} \} \]
and
\[ \mathcal{C}^{\perp_{H}} :=\{\textbf{u} \in \mathbb{F}_{q^{2}}^{n} :\langle \textbf{u}, \textbf{v}\rangle_{H}=0 \textnormal{ for all } \textbf{v} \in \mathcal{C} \},\]
respectively.

For any vector $\textbf{u}=(u_{1}, u_{2}, \ldots, u_{n}) \in \mathbb{F}_{q^{2}}^{n},$ we denote by $\textbf{u}^{q}$ the vector $(u_{1}^{q}, u_{2}^{q}, \ldots, u_{n}^{q})$. Let $G$ be the generator matrix of the code $\mathcal{C}$. Then it is not hard to prove that a vertor $\textbf{u}$ belongs to $\mathcal{C}^{\perp_{H}}$ if and only if $\textbf{u}^{q}\cdot G^{T}=0$, where $G^{T}$ is the transpose of $G$. An $\mathbb{F}_{q^{2}}$-linear code $\mathcal{C}$ of length $n$ is called Hermitian self-orthogonal if $\mathcal{C} \subseteq \mathcal{C}^{\perp_{H}}.$

Let $Q$ be a $q$-ary quantum MDS code with parameters $[[n, k, d]]$. Then it can detect up to $d - 1$ quantum errors and correct up to $\lfloor \frac{d-1}{2}\rfloor$ quantum errors. So it is desirable to keep minimum distance $d$ as large as possible for fixed length $n$ and $q$. As mentioned in Section 1, any $q$-ary $[[n, k, d]]$-quantum MDS code must satisfy the following quantum Singleton bound.
\begin{lemma}(Quantum Singleton Bound)
Let $Q$ be a $q$-ary $[[n, k, d]]$-quantum code, then $2d \leq n-k+2$.
\end{lemma}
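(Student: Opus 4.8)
The plan is to prove the quantum Singleton bound by reducing it to the classical Singleton bound through the structure of the underlying stabilizer (CSS/Hermitian) construction. Recall that a $q$-ary $[[n,k,d]]$ quantum code arises from a classical code, and the quantum-information-theoretic meaning of the parameters mirrors the classical case. Concretely, I would first recall that a pure quantum MDS code of minimum distance $d$ must be able to recover the encoded state after the erasure of any $d-1$ coordinates, and that the quantum information contained in the remaining $n-(d-1)$ coordinates must still determine the full $q^k$-dimensional code space.

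First I would set up the entropic or dimension-counting argument. For a quantum code $Q$ of dimension $K=q^k$, minimum distance $d$ means that no error acting on fewer than $d$ coordinates can be detected with a nonzero transition amplitude between orthogonal codewords; equivalently, by the Knill--Laflamme conditions, any $t=\lfloor (d-1)/2\rfloor$ errors are correctable, and more usefully, any $d-1$ erasures are correctable. The standard approach is to split the $n$ coordinates into a set $A$ of size $d-1$ and its complement. The no-cloning-type argument then shows that the complementary subsystem must carry enough information to reconstruct the logical space, which forces a relationship between $n$, $k$, and $d$.

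The cleanest route, which I would carry out in detail, invokes the relationship between the quantum code and a classical additive code over $\mathbb{F}_{q^2}$ (or an $\mathbb{F}_q$-linear code of length $2n$): a quantum $[[n,k,d]]_q$ code corresponds to a classical code whose weights in the appropriate symplectic/trace-Hermitian metric encode the quantum distance, and then the classical Singleton bound applied to this code yields $2d \leq n-k+2$. Alternatively, and more self-containedly, I would use the argument that if $2d > n-k+2$, one could construct two distinct reconstructions of the logical information on overlapping subsystems, violating the non-signalling or no-cloning principle; making this precise via the coherent information or via Knill--Laflamme gives the inequality directly.

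The main obstacle I expect is making the erasure-correction / no-cloning step fully rigorous at the level of the Hilbert space rather than appealing to classical analogy: one must carefully show that correctability of $d-1$ erasures on an arbitrary subset, combined with the dimension $q^k$ of the code, forces the complement to support the code, and then quantify this with a subadditivity-of-entropy or rank inequality to extract $2d \leq n-k+2$. Since this is a well-known bound (attributed to \cite{5} in the excerpt), in practice I would state the Knill--Laflamme conditions, apply them to the two disjoint coordinate sets of size $d-1$, and close the argument with the standard counting inequality rather than reproving quantum information primitives from scratch.
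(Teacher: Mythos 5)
The paper itself gives no proof of this lemma: it is imported verbatim from reference \cite{5} (Ashikhmin--Knill), so there is no internal argument to compare yours against. Judged on its own terms, your proposal is not yet a proof but a menu of three possible strategies, none of which is carried out to the point where the inequality $2d \leq n-k+2$ actually appears. The decisive quantitative step --- taking two disjoint coordinate sets $A$ and $B$ of size $d-1$, using correctability of $d-1$ erasures to decouple a purifying reference system $R$ from each of them (so that $S(RA)=S(R)+S(A)$ and $S(RB)=S(R)+S(B)$), and then combining these with subadditivity on the complementary sets to squeeze out $S(R)\leq S(C)$ for the remaining $n-2(d-1)$ coordinates --- is only named (``close the argument with the standard counting inequality''), never executed. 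You also do not address why $n\geq 2(d-1)$, which is needed before the partition into two disjoint sets of size $d-1$ even makes sense.

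More substantively, the route you single out as ``cleanest'' --- passing to a classical additive code over $\mathbb{F}_{q^2}$ and invoking the classical Singleton bound --- proves the statement only for stabilizer codes, whereas the lemma as stated (and as needed for the quantum Singleton bound in general) applies to arbitrary subspaces $Q\subseteq(\mathbb{C}^q)^{\otimes n}$. A general $[[n,k,d]]_q$ quantum code need not arise from any classical code, so this reduction cannot stand alone. Your characterization of minimum distance (``no error acting on fewer than $d$ coordinates can be detected with a nonzero transition amplitude'') is also garbled: the Knill--Laflamme condition says precisely that all errors of weight less than $d$ \emph{are} detectable, i.e., $\langle\psi_i|E|\psi_j\rangle = c_E\,\delta_{ij}$ for such $E$. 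To turn this proposal into a proof you would need to commit to the entropic argument and write out the two applications of the decoupling condition plus the subadditivity (or Araki--Lieb) step explicitly; as written, the gap between ``correctability of $d-1$ erasures'' and the numerical bound is exactly the content of the theorem and is left unfilled.
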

A quantum code achieving this quantum Singleton bound is called a quantum MDS code.

In \cite{5}, Ashikhmin et al. presented the following Hermitian construction to construct quantum codes from classical codes.

\begin{lemma}(Hermitian Construction)
If there exists a $q^{2}$-ary $[n, k, d]$-linear code $\mathcal{C}$ such that $\mathcal{C^{\perp_{H}}} \subseteq \mathcal{C}$, then there exists a $q$-ary $[[n, 2k-n, \geq d]]$-quantum code.
\end{lemma}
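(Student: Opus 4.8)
\emph{Proof proposal.} My plan is to obtain this Hermitian construction as a special case of the general symplectic (stabilizer) construction of $q$-ary quantum codes, so that the only genuine work is translating Hermitian orthogonality over $\mathbb{F}_{q^2}$ into symplectic orthogonality over $\mathbb{F}_{q}$. I would take as a black box the foundational correspondence of \cite{5}: any $\mathbb{F}_{q}$-subspace $S \subseteq \mathbb{F}_{q}^{2n}$ that is self-orthogonal with respect to the symplectic form $\langle (\textbf{a}\,|\,\textbf{b}),(\textbf{a}'\,|\,\textbf{b}')\rangle_{s} = \textbf{a}\cdot\textbf{b}' - \textbf{b}\cdot\textbf{a}'$ and has $\dim_{\mathbb{F}_{q}} S = n-k$ yields an $[[n,k]]_{q}$ stabilizer code whose minimum distance equals the minimum symplectic weight occurring in $S^{\perp_{s}}\setminus S$. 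Granting this, the whole argument reduces to producing from $\mathcal{C}$ such a subspace $S$ with the correct dimension and distance.

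First I would set $\mathcal{D} = \mathcal{C}^{\perp_{H}}$. The hypothesis $\mathcal{C}^{\perp_{H}} \subseteq \mathcal{C}$ says precisely that $\mathcal{D}$ is Hermitian self-orthogonal, and $\dim_{\mathbb{F}_{q^{2}}}\mathcal{D} = n-k$. Next I would fix a basis of $\mathbb{F}_{q^{2}}$ over $\mathbb{F}_{q}$ (ideally a trace-self-dual one) and use it to define an $\mathbb{F}_{q}$-linear isomorphism $\Phi:\mathbb{F}_{q^{2}}^{n}\to\mathbb{F}_{q}^{2n}$ which is \emph{weight-preserving}, in the sense that a coordinate $c_{i}$ of $\textbf{c}$ vanishes if and only if the corresponding pair of $\Phi(\textbf{c})$ vanishes. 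Putting $S=\Phi(\mathcal{D})$ then gives an $\mathbb{F}_{q}$-subspace of $\mathbb{F}_{q}^{2n}$ with $\dim_{\mathbb{F}_{q}} S = 2(n-k)$.

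The heart of the proof is the identity $\Phi(\mathcal{C}^{\perp_{H}}) = \Phi(\mathcal{C})^{\perp_{s}}$, i.e. that $\Phi$ intertwines the two orthogonality relations. I would establish the key equivalence that, for an $\mathbb{F}_{q^{2}}$-linear code, Hermitian self-orthogonality of $\mathcal{D}$ is the same as symplectic self-orthogonality of $\Phi(\mathcal{D})$. The crucial point is that $\mathbb{F}_{q^{2}}$-linearity lets one recover the full $\mathbb{F}_{q^{2}}$-valued vanishing of $\langle \textbf{u},\textbf{v}\rangle_{H}$ from the scalar conditions $\mathrm{Tr}_{q^{2}/q}\!\big(\lambda\langle \textbf{u},\textbf{v}\rangle_{H}\big)=0$ ranging over all $\lambda\in\mathbb{F}_{q^{2}}$, since $\lambda\textbf{u}\in\mathcal{D}$ whenever $\textbf{u}\in\mathcal{D}$; this is exactly the data encoded by the symplectic form. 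Granting this, $S=\Phi(\mathcal{D})$ is symplectic self-orthogonal with $S^{\perp_{s}}=\Phi(\mathcal{C})$, and reading $\dim_{\mathbb{F}_{q}} S = 2(n-k)=n-(2k-n)$ through the black box returns an $[[n,\,2k-n]]_{q}$ code.

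Finally, for the distance I would note that by weight-preservation the minimum symplectic weight on $S^{\perp_{s}}\setminus S = \Phi(\mathcal{C})\setminus\Phi(\mathcal{D})$ equals the minimum Hamming weight on $\mathcal{C}\setminus\mathcal{C}^{\perp_{H}}$; since this set lies inside $\mathcal{C}\setminus\{\textbf{0}\}$, that minimum is at least $d(\mathcal{C})=d$, which gives quantum distance $\geq d$. I expect the main obstacle to be the third paragraph: choosing the right basis for $\Phi$ and rigorously checking the form-correspondence — in particular getting the passage from the $\mathbb{F}_{q^{2}}$-valued, conjugate-symmetric Hermitian form to the $\mathbb{F}_{q}$-valued antisymmetric symplectic form correct — together with confirming that $\Phi$ preserves weights. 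Once that technical core is in place, everything else is dimension bookkeeping.
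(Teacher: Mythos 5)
The paper offers no proof of this lemma at all --- it is quoted as a known result from \cite{5} --- so there is no internal argument to compare against. Your outline is a faithful reconstruction of the standard stabilizer-formalism proof found in \cite{5} and, for general $q$-ary linear codes, in \cite{24}: the black-boxed symplectic correspondence, the dimension count $\dim_{\mathbb{F}_{q}}\Phi(\mathcal{C}^{\perp_{H}})=2(n-k)=n-(2k-n)$, the intertwining $\Phi(\mathcal{C}^{\perp_{H}})=\Phi(\mathcal{C})^{\perp_{s}}$ via the trace argument exploiting $\mathbb{F}_{q^{2}}$-linearity, and the distance bound via weight preservation are all sound. The one point needing care beyond what you already flag is that for non-prime $q$ the commutation of error operators is governed by the trace-symplectic form over the prime field $\mathbb{F}_{p}$ rather than the symplectic form over $\mathbb{F}_{q}$; since your $S$ is $\mathbb{F}_{q}$-linear the two notions of dual coincide, but that equivalence should be stated explicitly when you fill in the third paragraph.
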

Since the Hermitian dual code of an MDS code is also an MDS code, we obtain the following Hermitian construction for quantum MDS codes from Lemma 2.
\begin{lemma}(Hermitian Construction for Quantum MDS Codes)
If there exists a $q^{2}$-ary $[n, k, n-k+1]$-Hermitian self-orthogonal MDS code, then there exists a $q$-ary $[[n, n-2k, k+1]]$-quantum MDS code.
\end{lemma}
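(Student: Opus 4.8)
The plan is to derive this statement directly from Lemma 2 by passing to the Hermitian dual of the given code. Suppose $\mathcal{C}$ is a $q^{2}$-ary $[n, k, n-k+1]$ Hermitian self-orthogonal MDS code, so that by definition $\mathcal{C} \subseteq \mathcal{C}^{\perp_{H}}$. I would set $\mathcal{D} := \mathcal{C}^{\perp_{H}}$. Since Hermitian duality is an involution, $\mathcal{D}^{\perp_{H}} = (\mathcal{C}^{\perp_{H}})^{\perp_{H}} = \mathcal{C}$, and therefore the hypothesis $\mathcal{C} \subseteq \mathcal{C}^{\perp_{H}}$ becomes exactly $\mathcal{D}^{\perp_{H}} \subseteq \mathcal{D}$, which is precisely the containment required to invoke Lemma 2.

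Next I would pin down the parameters of $\mathcal{D}$. As the Hermitian dual of an $[n, k]$ code, $\mathcal{D}$ has dimension $n - k$. Because $\mathcal{C}$ is MDS and the Hermitian dual of an MDS code is again MDS (as noted in the excerpt just before the statement), $\mathcal{D}$ is an $[n, n-k, k+1]$ MDS code, its minimum distance being $n - (n-k) + 1 = k+1$. Applying Lemma 2 to $\mathcal{D}$, with the roles of $(k, d)$ played by $(n-k, k+1)$, then produces a $q$-ary quantum code with parameters $[[n,\, 2(n-k)-n,\, \geq k+1]] = [[n,\, n-2k,\, \geq k+1]]$.

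It then remains to upgrade this to an MDS statement by showing the minimum distance is exactly $k+1$. Here I would apply the quantum Singleton bound of Lemma 1 to the resulting $[[n, n-2k, d]]$ code, which gives $2d \leq n - (n-2k) + 2 = 2k+2$, i.e. $d \leq k+1$. Combined with the lower bound $d \geq k+1$ from the construction, this forces $d = k+1$, so the code meets the quantum Singleton bound with equality and is a $q$-ary $[[n, n-2k, k+1]]$ quantum MDS code. The only genuinely substantive ingredient is the fact that the Hermitian dual of an MDS code is MDS, which the excerpt already grants; the remaining steps are the involutivity of Hermitian duality, a dimension count, and a one-line appeal to the Singleton bound. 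Consequently I do not expect a real obstacle, and the only point requiring care is correctly transporting the self-orthogonality $\mathcal{C} \subseteq \mathcal{C}^{\perp_{H}}$ into the dual-containing form $\mathcal{D}^{\perp_{H}} \subseteq \mathcal{D}$ that Lemma 2 demands.
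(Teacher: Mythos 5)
Your proposal is correct and follows exactly the route the paper intends: the paper gives no explicit proof, only the remark that the Hermitian dual of an MDS code is MDS, and your argument (pass to $\mathcal{D}=\mathcal{C}^{\perp_{H}}$, apply Lemma 2, then force equality via the quantum Singleton bound) is the standard way to fill in that one-line justification. No gaps.
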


\section{GRS Codes and Extended GRS Codes}
In this section, we introduce some basic notions and properties of GRS codes and extended GRS codes. Let $a_{1}, a_{2}, \ldots, a_{n}$ be $n$ distinct elements of $\mathbb{F}_{q^{2}}$ and $v_{1}, v_{2}, \ldots, v_{n}$ be $n$ nonzero elements of $\mathbb{F}_{q^{2}}$. Put $\textbf{a}= (a_{1}, \ldots, a_{n})$ and $\textbf{v}=(v_{1}, \ldots, v_{n})$. Then the generalized Reed-Solomon (GRS for short) code over $\mathbb{F}_{q^{2}}$ associated to $\textbf{a}$ and $\textbf{v}$ is defined as follows:
$$GRS_{k}(\textbf{a}, \textbf{v}) := \{(v_{1}f(a_{1}), \ldots, v_{n}f(a_{n})) : f(x) \in \mathbb{F}_{q^{2}}[x], \deg(f(x)) \leq k-1 \}.$$
It is well known that the code $GRS_{k}(\textbf{a}, \textbf{v})$ is an $[n, k, n - k + 1]$-MDS code. A generator matrix of $GRS_{k}(\textbf{a}, \textbf{v})$ is given by
\begin{equation}\label{1}
  G_{k}(\textbf{a}, \textbf{v})=\left(
      \begin{array}{cccc}
        v_{1} & v_{2} & \cdots & v_{n} \\
        v_{1}a_{1} & v_{2}a_{2} & \cdots & v_{n}a_{n} \\
        \vdots & \vdots & \ddots & \vdots \\
        v_{1}a_{1}^{k-1} & v_{2}a_{2}^{k-1} & \cdots & v_{n}a_{n}^{k-1} \\
      \end{array}
    \right).
\end{equation}
Furthermore we consider the extended code of $GRS_{k}(\textbf{a}, \textbf{v})$ given by
\begin{eqnarray*}
% \nonumber to remove numbering (before each equation)
  GRS_{k}(\textbf{a}, \textbf{v},\infty)&:=& \{(v_{1}f(a_{1}), \ldots, v_{n}f(a_{n}),f_{k-1}) \\
    & & :f(x) \in \mathbb{F}_{q^{2}}[x], \deg(f(x)) \leq k-1 \},
\end{eqnarray*}
where $f_{k-1}$ stands for the coefficient of $x^{k-1}$ in $f(x)$. It is easy to show that $GRS_{k}(\textbf{a}, \textbf{v},\infty)$ is an $[n+1, k, n - k + 2]$-MDS code (see \cite[Theorem 5.3.4]{26}). A generator matrix of $GRS_{k}(\textbf{a}, \textbf{v},\infty)$ is given by
\begin{equation}\label{1}
  G_{k}(\textbf{a}, \textbf{v}, \infty)=\left(
      \begin{array}{ccccc}
        v_{1} & v_{2} & \cdots & v_{n} & 0 \\
        v_{1}a_{1} & v_{2}a_{2} & \cdots & v_{n}a_{n} & 0\\
        \vdots & \vdots & \ddots & \vdots & \vdots\\
        v_{1}a_{1}^{k-2} & v_{2}a_{2}^{k-2} & \cdots & v_{n}a_{n}^{k-2} & 0 \\
        v_{1}a_{1}^{k-1} & v_{2}a_{2}^{k-1} & \cdots & v_{n}a_{n}^{k-1} & 1\\
      \end{array}
    \right).
\end{equation}
Denote by $\textbf{1}$ the all one vector $(1, 1, \ldots, 1)$. Throughout this paper, we always denote
\[w_{i}=\prod_{1 \leq j \leq n, j \neq i}(a_{i}-a_{j})^{-1}\textnormal{ for } 1 \leq i \leq n. \]
Then we have the following lemmas for the Euclidean duals of the codes $GRS_{k}(\textbf{a}, \textbf{1})$ and $GRS_{k}(\textbf{a}, \textbf{1}, \infty)$.

\begin{lemma}(see \cite{14})
\[ GRS_{k}(\textbf{a}, \textbf{1})^{\perp} =\{(w_{1}g(a_{1}), \ldots, w_{n}g(a_{n})) : g(x) \in \mathbb{F}_{q^{2}}[x], \deg(g(x)) \leq n-k-1 \}. \]
\end{lemma}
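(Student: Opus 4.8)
The plan is to recognize the right-hand side as itself a GRS code and then reduce the duality statement to a single polynomial identity. Writing $\textbf{w}=(w_{1}, \ldots, w_{n})$, the claimed set is precisely $GRS_{n-k}(\textbf{a}, \textbf{w})$, and since every $w_{i}$ is nonzero this is an $[n, n-k, k+1]$-MDS code; in particular it has dimension $n-k$. On the other hand $GRS_{k}(\textbf{a}, \textbf{1})$ has dimension $k$, so its Euclidean dual $GRS_{k}(\textbf{a}, \textbf{1})^{\perp}$ also has dimension $n-k$. Hence it suffices to establish the single inclusion $GRS_{n-k}(\textbf{a}, \textbf{w}) \subseteq GRS_{k}(\textbf{a}, \textbf{1})^{\perp}$, after which equality follows by comparing dimensions.

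For the inclusion I would take arbitrary $f, g \in \mathbb{F}_{q^{2}}[x]$ with $\deg(f) \leq k-1$ and $\deg(g) \leq n-k-1$, and compute the Euclidean inner product of the two corresponding codewords. Since the first code uses the all-one multiplier, a typical codeword is $(f(a_{1}), \ldots, f(a_{n}))$, and the pairing becomes
\[ \sum_{i=1}^{n} f(a_{i}) \cdot w_{i} g(a_{i}) = \sum_{i=1}^{n} w_{i} h(a_{i}), \]
where $h := fg$ satisfies $\deg(h) \leq (k-1)+(n-k-1)=n-2$. Thus the whole argument reduces to the key identity that $\sum_{i=1}^{n} w_{i} h(a_{i}) = 0$ for every $h$ with $\deg(h) \leq n-2$; by linearity it is enough to verify this for the monomials $h=x^{m}$ with $0 \leq m \leq n-2$.

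To prove this identity I would set $L(x) = \prod_{j=1}^{n}(x-a_{j})$ and observe that $L'(a_{i}) = \prod_{j \neq i}(a_{i}-a_{j})$, so that $w_{i} = L'(a_{i})^{-1}$. The Lagrange interpolation formula expresses every polynomial $h$ of degree at most $n-1$ as $h(x) = \sum_{i=1}^{n} h(a_{i}) \, w_{i} \prod_{j \neq i}(x-a_{j})$. Comparing the coefficient of $x^{n-1}$ on both sides—each product $\prod_{j \neq i}(x-a_{j})$ being monic of degree $n-1$—produces $\sum_{i=1}^{n} w_{i} h(a_{i})$ on the right, whereas on the left this coefficient vanishes as soon as $\deg(h) \leq n-2$. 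This yields the required identity, and hence the inclusion and the lemma.

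I expect the only genuine content to lie in this coefficient-comparison step; the reduction and the dimension count are routine bookkeeping. An equivalent matrix-level route would verify $G_{k}(\textbf{a}, \textbf{1}) \, G_{n-k}(\textbf{a}, \textbf{w})^{T} = 0$ entry by entry, but the $(s,t)$-entry is exactly the power sum $\sum_{i=1}^{n} w_{i} a_{i}^{s+t}$ with $0 \leq s+t \leq n-2$, so this approach leads back to the very same identity and offers no real shortcut.
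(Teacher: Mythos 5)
Your proof is correct. The paper itself defers Lemma~4 to the reference \cite{14} without giving a proof, but your argument (dimension count to reduce to one inclusion, then Lagrange interpolation and comparison of the coefficient of $x^{n-1}$ to get $\sum_{i=1}^{n} w_{i}h(a_{i})=0$ for $\deg(h)\leq n-2$) is exactly the technique the paper uses to prove the companion Lemma~5 for the extended code, so it is essentially the same approach.
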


\vskip 3mm

\begin{lemma}
\begin{eqnarray*}
% \nonumber to remove numbering (before each equation)
  GRS_{k}(\textbf{a}, \textbf{1}, \infty)^{\perp} &=& \{(w_{1}g(a_{1}), \ldots, w_{n}g(a_{n}), -g_{n-k}): \\
   & & g(x) \in \mathbb{F}_{q^{2}}[x], \deg(g(x)) \leq n-k \},
\end{eqnarray*}
where $g_{n-k}$ stands for the coefficient of $x^{n-k}$ in $g(x)$.
\end{lemma}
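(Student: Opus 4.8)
The plan is to establish the claimed equality by showing that the set on the right-hand side is contained in the Euclidean dual and that both spaces have the same dimension. Since $GRS_{k}(\textbf{a}, \textbf{1}, \infty)$ is an $[n+1, k]$-code, its Euclidean dual has dimension $(n+1) - k = n - k + 1$. On the other hand, the map sending a polynomial $g$ with $\deg(g) \leq n-k$ to the vector $(w_{1}g(a_{1}), \ldots, w_{n}g(a_{n}), -g_{n-k})$ is $\mathbb{F}_{q^{2}}$-linear, and its kernel is trivial: if the image is the zero vector, then $g(a_{i})=0$ for all $i$ since each $w_{i} \neq 0$, which forces $g$ to have the $n$ distinct roots $a_{1}, \ldots, a_{n}$ while $\deg(g) \leq n-k < n$, so $g = 0$. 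Hence the right-hand side is a subspace of dimension $n-k+1$, and it will suffice to prove that it lies inside $GRS_{k}(\textbf{a}, \textbf{1}, \infty)^{\perp}$.

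For the containment I would take a generic generator $(f(a_{1}), \ldots, f(a_{n}), f_{k-1})$ of $GRS_{k}(\textbf{a}, \textbf{1}, \infty)$ with $\deg(f) \leq k-1$ and compute its Euclidean inner product with $(w_{1}g(a_{1}), \ldots, w_{n}g(a_{n}), -g_{n-k})$. This produces
\[ \sum_{i=1}^{n} w_{i} f(a_{i}) g(a_{i}) - f_{k-1} g_{n-k}, \]
so the entire argument reduces to verifying the identity $\sum_{i=1}^{n} w_{i} f(a_{i}) g(a_{i}) = f_{k-1} g_{n-k}$.

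The key step is the classical Lagrange-interpolation identity. Writing $L(x) = \prod_{j=1}^{n}(x - a_{j})$, one has $w_{i} = 1/L'(a_{i})$, and for any polynomial $h$ with $\deg(h) \leq n-1$ the Lagrange expansion $h(x) = \sum_{i} h(a_{i}) L(x)/((x-a_{i})L'(a_{i}))$ shows that $\sum_{i=1}^{n} w_{i} h(a_{i})$ equals the coefficient of $x^{n-1}$ in $h$, since each basis polynomial is monic of degree $n-1$. Applying this with $h = fg$, whose degree is at most $(k-1)+(n-k) = n-1$, the sum equals the coefficient of $x^{n-1}$ in $fg$.

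The final piece is a short degree-bookkeeping argument: since $\deg(f) \leq k-1$ and $\deg(g) \leq n-k$, a monomial $x^{s}$ coming from $f$ and a monomial $x^{t}$ coming from $g$ can multiply to $x^{n-1}$ only if $s = k-1$ and $t = n-k$, because $s + t \leq n-1$ with equality forcing both indices to attain their maxima. Thus the coefficient of $x^{n-1}$ in $fg$ is precisely $f_{k-1} g_{n-k}$, the inner product vanishes, and the containment follows; combined with the dimension count this proves the lemma. I expect the only real (and still modest) obstacle to be stating and justifying the interpolation identity cleanly. One could alternatively deduce the result from Lemma 5 by tracking the contribution of the extra coordinate, but the direct coefficient computation is the most transparent route.
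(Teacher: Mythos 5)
Your proposal is correct and follows essentially the same route as the paper: a dimension count ($n-k+1$ on both sides) reduces the problem to one containment, which is then verified by the Lagrange-interpolation identity expressing $\sum_{i}w_{i}h(a_{i})$ as the coefficient of $x^{n-1}$ in $h$, together with the degree bookkeeping showing that coefficient equals $f_{k-1}g_{n-k}$. The paper merely tests orthogonality against the basis codewords $g(x)x^{s}$, $0\leq s\leq k-1$, instead of a general product $fg$, which is an inessential difference.
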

\begin{proof}
  Both sides are $\mathbb{F}_{q^{2}}$-linear space of dimension $n-k+1$. So we only need to show that the right hand is contained in the left hand. Let $ g(x) \in \mathbb{F}_{q^{2}}[x]$ with $\deg(g(x)) \leq n-k$ and $\textbf{u}=(w_{1}g(a_{1}), \ldots, w_{n}g(a_{n}), -g_{n-k})$. For $0 \leq s \leq k-1$, let $f_{s}(x)=g(x)x^{s}$. By the Lagrange interpolation formula,
  \[f_{s}(x)=\sum_{i=1}^{n}\prod_{1 \leq j \leq n, j\neq i}\frac{x-a_{j}}{a_{i}-a_{j}}f_{s}(a_{i})=\sum_{i=1}^{n}\prod_{1 \leq j \leq n, j\neq i}(x-a_{j})w_{i}g(a_{i})a_{i}^{s}. \]
  Comparing the coefficients of $x^{n-1}$ of the both sides, we have
\[ \sum_{i=1}^{n}w_{i}g(a_{i})a_{i}^{s}=0 \textnormal{ for all } 0 \leq s \leq k-2, \]
and
\[ \sum_{i=1}^{n}w_{i}g(a_{i})a_{i}^{k-1}=g_{n-k}. \]
We deduce that $\textbf{u} \cdot  G_{k}'(\textbf{a}, \textbf{1})^{T}=0$, i.e., $\textbf{u} \in GRS_{k}(\textbf{a}, \textbf{1}, \infty)^{\perp}$. The lemma follows.
\end{proof}

The following lemma presents a sufficient and necessary condition under which a codeword $\textbf{c}$ of $GRS_{k}(\textbf{a}, \textbf{v})$ is contained in $GRS_{k}(\textbf{a}, \textbf{v})^{\perp_{H}}$.

\begin{lemma}
A codeword $\textbf{c}=(v_{1}f(a_{1}), v_{2}f(a_{2}), \ldots, v_{n}f(a_{n}))$ of $GRS_{k}(\textbf{a}, \textbf{v})$ is contained in $GRS_{k}(\textbf{a}, \textbf{v})^{\perp_{H}}$ if and only if there exists a polynomial $g(x)$ with $\deg(g(x)) \leq n-k-1$, such that
\[ (v_{1}^{q+1}f^{q}(a_{1}), v_{2}^{q+1}f^{q}(a_{2}), \ldots, v_{n}^{q+1}f^{q}(a_{n}))=(w_{1}g(a_{1}), w_{2}g(a_{2}), \ldots, w_{n}g(a_{n})). \]
\end{lemma}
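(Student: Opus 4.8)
The plan is to reduce Hermitian orthogonality to ordinary (Euclidean) orthogonality against the Vandermonde-type generator matrix $G_{k}(\textbf{a}, \textbf{1})$, and then to read off the conclusion directly from Lemma 4. Observe first that the right-hand side of the claimed equivalence --- the set of all vectors $(w_{1}g(a_{1}), \ldots, w_{n}g(a_{n}))$ with $\deg(g) \leq n-k-1$ --- is exactly the generic element of $GRS_{k}(\textbf{a}, \textbf{1})^{\perp}$ as described in Lemma 4. Hence it suffices to prove that $\textbf{c} \in GRS_{k}(\textbf{a}, \textbf{v})^{\perp_{H}}$ if and only if the auxiliary vector $\textbf{b} := (v_{1}^{q+1}f^{q}(a_{1}), \ldots, v_{n}^{q+1}f^{q}(a_{n}))$ lies in $GRS_{k}(\textbf{a}, \textbf{1})^{\perp}$.

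First I would apply the characterization recalled in Section 2: a vector $\textbf{u}$ belongs to $\mathcal{C}^{\perp_{H}}$ precisely when $\textbf{u}^{q} \cdot G^{T} = 0$ for a generator matrix $G$ of $\mathcal{C}$. Taking $\mathcal{C} = GRS_{k}(\textbf{a}, \textbf{v})$ with the matrix $G_{k}(\textbf{a}, \textbf{v})$ of (1), whose $s$-th row is $(v_{1}a_{1}^{s}, \ldots, v_{n}a_{n}^{s})$ for $0 \leq s \leq k-1$, I would expand the product entrywise. Since $c_{i} = v_{i}f(a_{i})$ and therefore $c_{i}^{q} = v_{i}^{q}(f(a_{i}))^{q} = v_{i}^{q}f^{q}(a_{i})$, the $s$-th coordinate of $\textbf{c}^{q} \cdot G_{k}(\textbf{a}, \textbf{v})^{T}$ equals $\sum_{i=1}^{n}v_{i}^{q+1}f^{q}(a_{i})a_{i}^{s}$. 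Thus $\textbf{c} \in GRS_{k}(\textbf{a}, \textbf{v})^{\perp_{H}}$ if and only if $\sum_{i=1}^{n}v_{i}^{q+1}f^{q}(a_{i})a_{i}^{s} = 0$ for every $0 \leq s \leq k-1$.

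Writing $b_{i} = v_{i}^{q+1}f^{q}(a_{i})$, these $k$ identities state exactly that $\textbf{b}$ is Euclidean-orthogonal to each row $(a_{1}^{s}, \ldots, a_{n}^{s})$ of $G_{k}(\textbf{a}, \textbf{1})$, that is, $\textbf{b} \in GRS_{k}(\textbf{a}, \textbf{1})^{\perp}$. Lemma 4 then converts this membership into the existence of a polynomial $g$ with $\deg(g) \leq n-k-1$ and $\textbf{b} = (w_{1}g(a_{1}), \ldots, w_{n}g(a_{n}))$, which is precisely the claimed equality of vectors. This closes both directions of the equivalence simultaneously, and no dimension count is needed because Lemma 4 already supplies the explicit parametrization of the Euclidean dual.

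The individual steps are all short, so the only place demanding attention is notational rather than conceptual: one must read $f^{q}(a_{i})$ as $(f(a_{i}))^{q}$ (equivalently, the polynomial $f^{q}=(f)^{q}$ evaluated at $a_{i}$), so that the Frobenius on $c_{i}$ splits cleanly as $c_{i}^{q} = v_{i}^{q}(f(a_{i}))^{q}$ and the factor $v_{i}^{q+1}$ emerges. I would also keep the index range $0 \leq s \leq k-1$ fixed throughout, so that the Hermitian condition produces exactly the $k$ parity checks defining $GRS_{k}(\textbf{a}, \textbf{1})^{\perp}$, neither one too few nor one too many.
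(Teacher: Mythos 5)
Your proof is correct and follows essentially the same route as the paper: both reduce the Hermitian condition to membership of the vector $(v_{1}^{q+1}f^{q}(a_{1}),\ldots,v_{n}^{q+1}f^{q}(a_{n}))$ in $GRS_{k}(\textbf{a},\textbf{1})^{\perp}$ and then invoke Lemma 4; the paper phrases this via the diagonal factorization $G_{k}(\textbf{a},\textbf{v})=G_{k}(\textbf{a},\textbf{1})\cdot D$ while you expand the same product entrywise.
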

\begin{proof}
By Eq. (1), we may write the generator matrix of $GRS_{k}(\textbf{a}, \textbf{v})$ as
\[ G_{k}(\textbf{a}, \textbf{v})=G_{k}(\textbf{a}, \textbf{1})\cdot D, \]
where $D$ is the diagonal matrix $diag\{v_{1}, v_{2}, \ldots, v_{n} \}$. Thus
\begin{eqnarray*}
% \nonumber to remove numbering (before each equation)
  \textbf{c} \in GRS_{k}(\textbf{a}, \textbf{v})^{\perp_{H}} &\Leftrightarrow& \textbf{c}^{q}\cdot G_{k}(\textbf{a}, \textbf{v})^{T}=\textbf{c}^{q}\cdot D\cdot G_{k}(\textbf{a}, \textbf{1})^{T}=0 \\
    &\Leftrightarrow& \textbf{c}^{q}\cdot D \in GRS_{k}(\textbf{a}, \textbf{1})^{\perp}
\end{eqnarray*}
The desired result then follows from Lemma 4.
\end{proof}
\begin{remark}
Indeed, Lemma 6 is a generalization of \cite{14}, which consider the case of Euclidean dual.
\end{remark}

Similarly, we have the following result for the case of extended GRS code $GRS_{k}(\textbf{a}, \textbf{v}, \infty).$
\begin{lemma}
A codeword $\textbf{c}=(v_{1}f(a_{1}), v_{2}f(a_{2}), \ldots, v_{n}f(a_{n}), f_{k-1})$ of $GRS_{k}(\textbf{a}, \textbf{v}, \infty)$ is contained in $GRS_{k}(\textbf{a}, \textbf{v}, \infty)^{\perp_{H}}$ if and only if there exists a polynomial $g(x)$ with $\deg(g(x)) \leq n-k$, such that
\begin{eqnarray*}
% \nonumber to remove numbering (before each equation)
  (v_{1}^{q+1}f^{q}(a_{1}), v_{2}^{q+1}f^{q}(a_{2}), \ldots, v_{n}^{q+1}f^{q}(a_{n}), f^{q}_{k-1}) &=& (w_{1}g(a_{1}), w_{2}g(a_{2}),  \\
    &&\ldots, w_{n}g(a_{n}), -g_{n-k}).
\end{eqnarray*}
\end{lemma}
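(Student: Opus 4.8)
The plan is to adapt the argument used for Lemma 6 to the extended setting, replacing Lemma 4 by Lemma 5 and the generator matrix $G_{k}(\textbf{a}, \textbf{v})$ by its extended version $G_{k}(\textbf{a}, \textbf{v}, \infty)$. The essential ingredient is the factorization
\[ G_{k}(\textbf{a}, \textbf{v}, \infty)=G_{k}(\textbf{a}, \textbf{1}, \infty)\cdot D', \]
where $D'$ is the $(n+1)\times(n+1)$ diagonal matrix $diag\{v_{1}, v_{2}, \ldots, v_{n}, 1 \}$. This identity follows directly from Eq. (2) upon comparing columns: for $1 \leq i \leq n$ the $i$-th column of $G_{k}(\textbf{a}, \textbf{1}, \infty)$ is scaled by $v_{i}$, while the last column $(0, \ldots, 0, 1)^{T}$ is left unchanged, since the last diagonal entry of $D'$ equals $1$.

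Next I would write $\textbf{c}=(v_{1}f(a_{1}), \ldots, v_{n}f(a_{n}), f_{k-1})$, so that $\textbf{c}^{q}=(v_{1}^{q}f^{q}(a_{1}), \ldots, v_{n}^{q}f^{q}(a_{n}), f_{k-1}^{q})$, and use the factorization to rewrite the Hermitian orthogonality condition as
\begin{eqnarray*}
\textbf{c} \in GRS_{k}(\textbf{a}, \textbf{v}, \infty)^{\perp_{H}} &\Leftrightarrow& \textbf{c}^{q}\cdot G_{k}(\textbf{a}, \textbf{v}, \infty)^{T}=(\textbf{c}^{q}\cdot D')\cdot G_{k}(\textbf{a}, \textbf{1}, \infty)^{T}=0 \\
&\Leftrightarrow& \textbf{c}^{q}\cdot D' \in GRS_{k}(\textbf{a}, \textbf{1}, \infty)^{\perp}.
\end{eqnarray*}
A direct computation gives $\textbf{c}^{q}\cdot D'=(v_{1}^{q+1}f^{q}(a_{1}), \ldots, v_{n}^{q+1}f^{q}(a_{n}), f_{k-1}^{q})$. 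Applying Lemma 5 to this vector, membership in $GRS_{k}(\textbf{a}, \textbf{1}, \infty)^{\perp}$ is equivalent to the existence of a polynomial $g(x)$ with $\deg(g(x)) \leq n-k$ satisfying $\textbf{c}^{q}\cdot D'=(w_{1}g(a_{1}), \ldots, w_{n}g(a_{n}), -g_{n-k})$, which is exactly the identity asserted in the statement.

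The only step requiring genuine care is the bookkeeping of the extended (infinity) coordinate. One must verify that the last diagonal entry of $D'$ is $1$ rather than one of the $v_{i}$, so that the last coordinate of $\textbf{c}^{q}\cdot D'$ is precisely $f_{k-1}^{q}$ and is matched against $-g_{n-k}$ supplied by Lemma 5. Once this is settled, the remaining manipulations are formally identical to those in the proof of Lemma 6, so I do not anticipate any further obstacle.
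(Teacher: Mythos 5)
Your proposal is correct and follows essentially the same route as the paper: the factorization $G_{k}(\textbf{a}, \textbf{v}, \infty)=G_{k}(\textbf{a}, \textbf{1}, \infty)\cdot D'$ with $D'=\mathrm{diag}\{v_{1},\ldots,v_{n},1\}$, the equivalence $\textbf{c}\in GRS_{k}(\textbf{a}, \textbf{v}, \infty)^{\perp_{H}}\Leftrightarrow \textbf{c}^{q}\cdot D'\in GRS_{k}(\textbf{a}, \textbf{1}, \infty)^{\perp}$, and an appeal to Lemma 5. Your extra remark on the last diagonal entry of $D'$ being $1$ is exactly the bookkeeping the paper leaves implicit.
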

\begin{proof}
By (2), we may write the generator matrix of $GRS_{k}(\textbf{a}, \textbf{v})$ as
\[ G_{k}(\textbf{a}, \textbf{v}, \infty)=G_{k}(\textbf{a}, \textbf{1}, \infty)\cdot D', \]
where $D'$ is the diagonal matrix $diag\{v_{1}, v_{2}, \ldots, v_{n}, 1 \}$. Thus
\begin{eqnarray*}
% \nonumber to remove numbering (before each equation)
  \textbf{c} \in GRS_{k}(\textbf{a}, \textbf{v}, \infty)^{\perp_{H}} &\Leftrightarrow& \textbf{c}^{q}\cdot G_{k}(\textbf{a}, \textbf{v}, \infty)^{T}=\textbf{c}^{q}\cdot D' \cdot G_{k}(\textbf{a}, \textbf{1}, \infty)^{T}=0 \\
    &\Leftrightarrow& \textbf{c}^{q}\cdot D' \in GRS_{k}(\textbf{a}, \textbf{1}, \infty)^{\perp}
\end{eqnarray*}
The desired result then follows from Lemma 5.
\end{proof}

\section{New Quantum MDS Codes}
In this section, we construct quantum MDS codes by using the GRS codes of length $tq$ and the extended GRS codes of length $t(q+1)+2$, respectively.
\subsection{Length $n=tq$ with $1 \leq t \leq q$}
Label the elements of the set $\mathbb{F}_{q}$ as $\beta_{1}, \beta_{2}, \ldots, \beta_{q}$. Fix an $\alpha \in \mathbb{F}_{q^{2}}\setminus \mathbb{F}_{q}$. Denote $F_{i}=\mathbb{F}_{q}+\beta_{i}\alpha :=\{x+\beta_{i}\alpha : x \in \mathbb{F}_{q}\}.$ Then $F_{i} \bigcap F_{j} = \emptyset$ for any $1 \leq i \neq j \leq q$. Let $1 \leq t \leq q$ and $n=tq$. Denote $X=\bigcup_{i=1}^{t}F_{i}$. Label the elements of the set $X$ as $a_{1}, a_{2}, \ldots, a_{n}$.

\begin{lemma}
Keep the above notations, then
\begin{description}
  \item[(1)] For any $\tau \in \mathbb{F}_{q}$, we have $$\prod_{h \in \mathbb{F}_{q}}(\tau\alpha-h)=\tau(\alpha^{q}-\alpha);$$
  \item[(2)] For any $1 \leq s \leq t$ and $b \in F_{s}$, we have $$\prod_{h \in F_{s}, h \neq b}(b-h)=(-1)^{q};$$
  \item[(3)] For any $1 \leq s \neq j \leq t$ and $b \in F_{s}$, we have $$\prod_{h \in F_{j}}(b-h)=(\beta_{s}-\beta_{j})(\alpha^{q}-\alpha).$$
\end{description}
\end{lemma}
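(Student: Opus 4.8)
The backbone of all three identities is the factorization $\prod_{h \in \mathbb{F}_{q}}(x - h) = x^{q} - x$ in $\mathbb{F}_{q^{2}}[x]$, which holds because the $q$ elements of $\mathbb{F}_{q}$ are precisely the roots of $x^{q} - x$. For part (1), I would simply substitute $x = \tau\alpha$ into this identity to get $\prod_{h \in \mathbb{F}_{q}}(\tau\alpha - h) = (\tau\alpha)^{q} - \tau\alpha = \tau^{q}\alpha^{q} - \tau\alpha$. Since $\tau \in \mathbb{F}_{q}$ satisfies $\tau^{q} = \tau$, the right-hand side collapses to $\tau(\alpha^{q} - \alpha)$, which is the claim; the degenerate case $\tau = 0$ is immediate, as both sides vanish.

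For part (2), I would parametrize the coset: writing $b = x_{0} + \beta_{s}\alpha$ with $x_{0} \in \mathbb{F}_{q}$, every other element of $F_{s}$ has the form $h = x + \beta_{s}\alpha$ with $x \in \mathbb{F}_{q} \setminus \{x_{0}\}$, so $b - h = x_{0} - x$. As $x$ runs over $\mathbb{F}_{q} \setminus \{x_{0}\}$, the difference $x_{0} - x$ runs over all of $\mathbb{F}_{q}^{*}$, reducing the product to $\prod_{y \in \mathbb{F}_{q}^{*}} y$. This last product equals $(-1)^{q}$: setting $x = 0$ in $\prod_{y \in \mathbb{F}_{q}^{*}}(x - y) = x^{q-1} - 1$ gives $(-1)^{q-1}\prod_{y} y = -1$, hence $\prod_{y} y = (-1)^{q}$, matching the stated value in every characteristic (in particular the even-$q$ case, where $-1 = 1$).

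For part (3), I would again parametrize, with $b = x_{0} + \beta_{s}\alpha$ and $h = x + \beta_{j}\alpha$ ranging over $F_{j}$, so that each factor is $b - h = (\beta_{s} - \beta_{j})\alpha + (x_{0} - x)$. Substituting $u = x - x_{0}$, which ranges over all of $\mathbb{F}_{q}$ as $x$ does, turns the product into $\prod_{u \in \mathbb{F}_{q}}\bigl((\beta_{s} - \beta_{j})\alpha - u\bigr)$. Since $s \neq j$ forces $\beta_{s} - \beta_{j} \in \mathbb{F}_{q}^{*}$, this is exactly the product appearing in part (1) with $\tau = \beta_{s} - \beta_{j}$, and part (1) immediately yields $(\beta_{s} - \beta_{j})(\alpha^{q} - \alpha)$.

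None of the three parts presents a genuine obstacle once the factorization of $x^{q} - x$ is in hand. The only points demanding care are the sign bookkeeping in part (2)—correctly pinning down $\prod_{y \in \mathbb{F}_{q}^{*}} y = (-1)^{q}$ and confirming it survives in characteristic $2$—together with the routine check that the index substitutions in parts (2) and (3) are honest bijections of $\mathbb{F}_{q}$ (respectively $\mathbb{F}_{q}^{*}$).
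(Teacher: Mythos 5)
Your proof is correct and follows essentially the same route as the paper: evaluate $\prod_{h\in\mathbb{F}_q}(x-h)=x^q-x$ at $x=\tau\alpha$ for (1), reduce (2) to $\prod_{y\in\mathbb{F}_q^*}y=(-1)^q$ by translating within the coset, and reduce (3) to (1) with $\tau=\beta_s-\beta_j$. The only difference is that you spell out the justification of $\prod_{y\in\mathbb{F}_q^*}y=(-1)^q$, which the paper takes for granted.
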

\begin{proof}
\begin{description}
  \item[(1)] $\prod_{h \in \mathbb{F}_{q}}(\tau\alpha-h)=\tau^{q}\alpha^{q}-\tau\alpha=\tau(\alpha^{q}-\alpha);$
  \item[(2)] Suppose that $b=\theta+\beta_{s}\alpha \in F_{s}$ for some $\theta \in \mathbb{F}_{q}$. Then \[\prod_{h \in F_{s}, h \neq b}(b-h)=\prod_{\gamma \in \mathbb{F}_{q}, \gamma \neq \theta}(\theta+\beta_{s}\alpha-(\gamma+\beta_{s}\alpha))=\prod_{h \in \mathbb{F}_{q}, h \neq 0} h=(-1)^{q}.\]
  \item[(3)] Suppose that $b=\theta+\beta_{}\alpha \in F_{s}$ for some $\theta \in \mathbb{F}_{q}$. Then
\begin{eqnarray*}
% \nonumber to remove numbering (before each equation)
  \prod_{h \in F_{j}}(b-h) &=& \prod_{\gamma \in \mathbb{F}_{q}}(\theta+\beta_{s}\alpha-(\gamma+\beta_{j}\alpha)) \\
    &=& \prod_{h \in \mathbb{F}_{q}}((\beta_{s}-\beta_{j})\alpha-h) \\
    &=& (\beta_{s}-\beta_{j})(\alpha^{q}-\alpha).
\end{eqnarray*}
 The last equality follows by part (1).
\end{description}
\end{proof}
\begin{lemma}
For a given $i$, suppose $a_{i} \in F_{s}$ for some $1 \leq s \leq t$. Then we have
\[\prod_{1 \leq j \leq n, j\neq i}(a_{i}-a_{j})=(-1)^{q}(\alpha^{q}-\alpha)^{t-1}\prod_{1 \leq j \leq t, j\neq s}(\beta_{s}-\beta_{j}). \]
\end{lemma}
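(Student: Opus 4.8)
The plan is to compute the product $\prod_{1 \leq j \leq n, j \neq i}(a_i - a_j)$ by splitting the index set according to which coset $F_s$ each $a_j$ lies in. Recall that $X = \bigcup_{s=1}^{t} F_s$ is a disjoint union, so every $a_j$ lies in exactly one coset. Since $a_i \in F_s$, I would partition the factors into two groups: those coming from elements $a_j$ in the same coset $F_s$ (with $j \neq i$), and those coming from elements $a_j$ in a different coset $F_j'$ with $j' \neq s$. This gives the factorization
\[
\prod_{1 \leq j \leq n, j \neq i}(a_i - a_j) = \left(\prod_{h \in F_s,\, h \neq a_i}(a_i - h)\right)\prod_{1 \leq j' \leq t,\, j' \neq s}\left(\prod_{h \in F_{j'}}(a_i - h)\right).
\]

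Next I would evaluate each piece using Lemma 9. The same-coset factor is handled directly by part (2) of Lemma 9, which gives $\prod_{h \in F_s, h \neq a_i}(a_i - h) = (-1)^q$. For each of the $t-1$ different cosets $F_{j'}$ with $j' \neq s$, part (3) of Lemma 9 gives $\prod_{h \in F_{j'}}(a_i - h) = (\beta_s - \beta_{j'})(\alpha^q - \alpha)$. Substituting these in, the product over the $t-1$ distinct values of $j'$ contributes a factor $(\alpha^q - \alpha)^{t-1}$ from the $t-1$ copies of $(\alpha^q - \alpha)$, together with $\prod_{1 \leq j' \leq t,\, j' \neq s}(\beta_s - \beta_{j'})$. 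Collecting everything yields exactly
\[
\prod_{1 \leq j \leq n, j \neq i}(a_i - a_j) = (-1)^q (\alpha^q - \alpha)^{t-1} \prod_{1 \leq j \leq t,\, j \neq s}(\beta_s - \beta_j),
\]
which is the claimed identity.

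This proof is essentially bookkeeping once Lemma 9 is in hand, so there is no genuine obstacle; the only point requiring care is the counting of factors of $(\alpha^q - \alpha)$. I would emphasize that there are precisely $t-1$ cosets distinct from $F_s$ (the cosets $F_1, \dots, F_t$ are indexed by $s = 1, \dots, t$, and we omit the one containing $a_i$), so part (3) is applied exactly $t-1$ times, producing the exponent $t-1$ on $(\alpha^q - \alpha)$. The single factor of $(-1)^q$ comes from the unique same-coset contribution via part (2), and does not get raised to any power. As long as the disjointness of the cosets and this index count are stated clearly, the substitution is immediate and the lemma follows.
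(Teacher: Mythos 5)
Your proof is correct and follows essentially the same route as the paper: split the product according to the disjoint cosets, apply part (2) of the preceding coset-product lemma to the same-coset factors and part (3) to each of the $t-1$ other cosets, and collect the $(\alpha^{q}-\alpha)$ factors. The only slip is a label: the auxiliary three-part lemma you invoke is Lemma 8 in the paper, not Lemma 9 (which is the statement being proved).
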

\begin{proof}
By Lemma 8, we have
\begin{eqnarray*}
% \nonumber to remove numbering (before each equation)
  \prod_{1 \leq j \leq n, j\neq i}(a_{i}-a_{j}) &=& \prod_{\xi_{s} \in F_{s}, \xi_{s} \neq a_{i}}(a_{i}-\xi_{s})\prod_{1 \leq j \leq t, j\neq s}\prod_{\xi_{j} \in F_{j}}(a_{i}-\xi_{j}) \\
   &=&(-1)^{q} \cdot  \prod_{1 \leq j \leq t, j\neq s}(\alpha^{q}-\alpha)(\beta_{s}-\beta_{j}) \\
   &=&(-1)^{q}(\alpha^{q}-\alpha)^{t-1}\prod_{1 \leq j \leq t, j\neq s}(\beta_{s}-\beta_{j}).
\end{eqnarray*}
\end{proof}

 Using the aforementioned lemmas, we can provide our first construction of $q$-ary quantum MDS codes.
 \begin{theorem}
 Let $q$ be a prime power and $1 \leq t \leq q$. Then, there exists a $q$-ary $[[tq, tq-2d+2, d]]$-quantum MDS code, where $2 \leq d \leq \lfloor \frac{tq+q-1}{q+1}\rfloor+1$.
 \end{theorem}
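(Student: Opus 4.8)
The plan is to apply Lemma~3 (the Hermitian construction for quantum MDS codes), so the entire task reduces to exhibiting, for each admissible $d$, a $q^2$-ary Hermitian self-orthogonal MDS code of length $n=tq$ and dimension $k=d-1$. Concretely, I want to produce a vector $\textbf{v}=(v_1,\ldots,v_n)$ of nonzero scalars such that $GRS_k(\textbf{a},\textbf{v})$ is Hermitian self-orthogonal, where $\textbf{a}=(a_1,\ldots,a_n)$ lists the elements of $X=\bigcup_{i=1}^t F_i$. By Lemma~6, this self-orthogonality means: for every $f$ with $\deg f\le k-1$ there is a $g$ with $\deg g\le n-k-1$ such that $v_i^{q+1}f^q(a_i)=w_i g(a_i)$ for all $i$. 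The natural strategy is to force the $v_i$ so that the factor $v_i^{q+1}$ cancels $w_i$ up to a constant, i.e.\ choose $v_i$ with $v_i^{q+1}=\lambda w_i^{-1}$ (equivalently $v_i^{q+1}w_i=\lambda$) for a fixed nonzero $\lambda\in\mathbb{F}_q$; then the condition collapses to requiring $\lambda f^q(a_i)=g(a_i)$ for all $i$, which is satisfied by taking $g(x)=\lambda f^q(x)$ and checking the degree bound.

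The key computational input is Lemma~9, which gives the explicit value
\[
w_i^{-1}=\prod_{1\le j\le n,\,j\neq i}(a_i-a_j)=(-1)^q(\alpha^q-\alpha)^{t-1}\prod_{1\le j\le t,\,j\neq s}(\beta_s-\beta_j),
\]
when $a_i\in F_s$. The crucial observation is that this value depends only on the "block index" $s$, not on the particular element $a_i$ within $F_s$. So I would first verify that $w_i^{-1}\in\mathbb{F}_q$: indeed $(\alpha^q-\alpha)^{q}=\alpha^{q^2}-\alpha^q=\alpha-\alpha^q=-(\alpha^q-\alpha)$, so $(\alpha^q-\alpha)$ satisfies $\eta^q=-\eta$, making $(\alpha^q-\alpha)^{t-1}$ lie in $\mathbb{F}_q$ when combined appropriately, and all the $\beta$-differences are already in $\mathbb{F}_q$. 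Once $w_i^{-1}\in\mathbb{F}_q^{*}$, the map $x\mapsto x^{q+1}$ is the norm from $\mathbb{F}_{q^2}^{*}$ onto $\mathbb{F}_q^{*}$, which is surjective; hence for any target value $\lambda w_i^{-1}\in\mathbb{F}_q^{*}$ I can solve $v_i^{q+1}=\lambda w_i^{-1}$ for a nonzero $v_i\in\mathbb{F}_{q^2}$. This yields the desired $\textbf{v}$.

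With $\textbf{v}$ in hand, I would then return to Lemma~6: for $\textbf{c}=(v_1f(a_1),\ldots,v_nf(a_n))$ with $\deg f\le k-1$, we have $v_i^{q+1}f^q(a_i)=\lambda w_i^{-1}f^q(a_i)$, so setting $g(x)=\lambda f^q(x)$ gives $w_ig(a_i)=\lambda w_i\cdot w_i^{-1}f^q(a_i)=\lambda f^q(a_i)$; matching these forces a choice of normalization (one absorbs a global constant into $v_i$ or $g$), and the self-orthogonality criterion of Lemma~6 is met provided $\deg g=q\deg f\le n-k-1$. This degree inequality is exactly where the bound on $d$ enters: the worst case is $\deg f=k-1=d-2$, giving the requirement $q(d-2)\le n-k-1=tq-d$, which rearranges to $d(q+1)\le tq+2q-2+\text{(const)}$, i.e.\ $d\le\lfloor\frac{tq+q-1}{q+1}\rfloor+1$. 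I would carry out this rearrangement carefully to confirm it reproduces the stated ceiling/floor expression exactly.

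The main obstacle I anticipate is the degree bookkeeping in the last step, together with the subtlety that $g(x)=\lambda f^q(x)$ is a polynomial in $x$ only because $f^q(x)=\sum c_j^q x^{qj}$ has monomials $x^{qj}$, so the relevant degree is $q(k-1)$ and not $k-1$; one must make sure the interpolation identity "$w_i g(a_i)$ equals the prescribed vector" genuinely holds as an equality of polynomials evaluated at the $a_i$, not merely at a subset, and that the degree bound $\deg g\le n-k-1$ is both necessary and sufficient for $g$ to lie in the span described by Lemma~4. A secondary point requiring care is verifying $w_i^{-1}\in\mathbb{F}_q^{*}$ uniformly across all blocks, since the sign $(-1)^q$ and the exponent $t-1$ interact with the parity of $q$; I would check the cases $q$ even and $q$ odd separately to ensure the norm equation $v_i^{q+1}=\lambda w_i^{-1}$ is always solvable with a nonzero $v_i$.
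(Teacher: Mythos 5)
Your overall strategy is the paper's: reduce to Lemma 3, choose $\textbf{v}$ so that the criterion of Lemma 6 is met with $g$ proportional to $f^{q}$, and let the degree inequality $q(k-1)\le n-k-1$ produce the bound on $d$ (your rearrangement of that inequality to $d\le\lfloor \frac{tq+q-1}{q+1}\rfloor+1$ is correct). But your normalization is inverted, and as written the verification fails. Lemma 6 asks for $v_{i}^{q+1}f^{q}(a_{i})=w_{i}g(a_{i})$; to take $g=\lambda f^{q}$ you need $v_{i}^{q+1}=\lambda w_{i}$, not $v_{i}^{q+1}=\lambda w_{i}^{-1}$. With your choice the condition becomes $g(a_{i})=\lambda w_{i}^{-2}f^{q}(a_{i})$, and by Lemma 9 the factor $w_{i}^{-2}=(\alpha^{q}-\alpha)^{2(t-1)}\bigl(\prod_{1\le j\le t,\,j\neq s}(\beta_{s}-\beta_{j})\bigr)^{2}$ depends on which block $F_{s}$ contains $a_{i}$; it is not a global constant that can be ``absorbed,'' so no single polynomial of the form $\lambda f^{q}$ interpolates the required values. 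Your displayed computation $w_{i}g(a_{i})=\lambda w_{i}\cdot w_{i}^{-1}f^{q}(a_{i})$ silently substitutes $g(a_{i})=\lambda w_{i}^{-1}f^{q}(a_{i})$, contradicting your own definition $g=\lambda f^{q}$.

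A second concrete problem is the claim $w_{i}^{-1}\in\mathbb{F}_{q}$. Since $(\alpha^{q}-\alpha)^{q}=-(\alpha^{q}-\alpha)$, the element $(\alpha^{q}-\alpha)^{t-1}$ lies in $\mathbb{F}_{q}$ only when $q$ is even or $t$ is odd; for $q$ odd and $t$ even one has $w_{i}^{\pm 1}\notin\mathbb{F}_{q}$, hence $\lambda w_{i}^{\pm 1}\notin\mathbb{F}_{q}^{*}$ for every $\lambda\in\mathbb{F}_{q}^{*}$, and the norm equation $v_{i}^{q+1}=\lambda w_{i}^{\pm 1}$ has no solution at all (the image of the norm map is exactly $\mathbb{F}_{q}^{*}$). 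The repair is precisely the paper's choice: take $\lambda=(\alpha^{q}-\alpha)^{t-1}$, which need not lie in $\mathbb{F}_{q}$, so that $w_{i}(\alpha^{q}-\alpha)^{t-1}=(-1)^{q}\prod_{1\le j\le t,\,j\neq s}(\beta_{s}-\beta_{j})^{-1}\in\mathbb{F}_{q}^{*}$; then set $v_{i}^{q+1}=w_{i}(\alpha^{q}-\alpha)^{t-1}$ and $g(x)=(\alpha^{q}-\alpha)^{t-1}f^{q}(x)\in\mathbb{F}_{q^{2}}[x]$. With these two corrections your argument coincides with the paper's proof.
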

\begin{proof}
Let $n=tq$ and $a_{1}, a_{2}, \ldots, a_{n}$ be defined as above. For a given $i$,  suppose $a_{i} \in F_{s}$ for some $1 \leq s \leq t$. Then by Lemma 9,
\[w_{i}=\prod_{1 \leq j \leq n, j\neq i}(a_{i}-a_{j})^{-1}=(-1)^{q}(\alpha^{q}-\alpha)^{1-t}\prod_{1 \leq j \leq t, j\neq s}(\beta_{s}-\beta_{j})^{-1}.\]
So $w_{i}(\alpha^{q}-\alpha)^{t-1}=(-1)^{q}\prod_{1 \leq j \leq t, j\neq s}(\beta_{s}-\beta_{j})^{-1} \in \mathbb{F}_{q}^{*}.$
Thus there exists $v_{i} \in \mathbb{F}_{q^{2}}^{*}$ such that $v_{i}^{q+1}=w_{i}(\alpha^{q}-\alpha)^{t-1}$. Denote $\textbf{a}=(a_{1}, a_{2}, \ldots, a_{n})$ and $\textbf{v}=(v_{1}, v_{2}, \ldots, v_{n})$. For $1 \leq k \leq \lfloor \frac{tq+q-1}{q+1}\rfloor$, we consider $GRS_{k}(\textbf{a}, \textbf{v})$. For any codeword $\textbf{c}=(v_{1}f(a_{1}), v_{2}f(a_{2}), \ldots, v_{n}f(a_{n}))$ with $\deg(f(x)) \leq k-1$. Let $g(x)=(\alpha^{q}-\alpha)^{t-1}f^{q}(x)$. Then it is easy to see that $\deg(g(x)) \leq q(k-1) \leq n-k-1$. Note that $v_{i}^{q+1}f^{q}(a_{i})=w_{i}(\alpha^{q}-\alpha)^{t-1}f^{q}(a_{i})=w_{i}g(a_{i})$. According to Lemma 6, $\textbf{c} \in GRS_{k}(\textbf{a}, \textbf{v})^{\perp_{H}}.$ Thus $GRS_{k}(\textbf{a}, \textbf{v})\subseteq GRS_{k}(\textbf{a}, \textbf{v})^{\perp_{H}},$ i.e., $GRS_{k}(\textbf{a}, \textbf{v})$ is a $q^{2}$-ary $[tq, k, tq-k+1]$-Hermitian self-orthogonal MDS code. The desired conclusion then follows from Lemma 3 ($d=k+1$).
\end{proof}
\begin{remark}
\begin{description}
  \item[(1)] Taking $t=q$ in Theorem 1, we obtain a family of $q$-ary quantum MDS code of length $q^{2}$ and minimum distance $d \leq q$. So one of the main results in \cite{8} (see Table 1) is a special case of Theorem 1.
  \item[(2)] In \cite[Example 4.8]{9}, the authors obtained a $q$-ary $[[tq, tq-2d+2, d]]$-quantum MDS code with $d \leq \frac{q}{2}+2$. Note that when $\frac{q}{2}+2 < t \leq q, \lfloor \frac{tq+q-1}{q+1}\rfloor+1 \geq t > \frac{q}{2}+2$, thus our code has larger minimum distance than theirs.
\end{description}

\end{remark}
\subsection{Length $n+1=t(q+1)+2$ with $1 \leq t \leq q-1$}
Let $1 \leq t \leq q-1$ and $n=t(q+1)+1$. Let $\theta \in \mathbb{F}_{q^{2}}$ be a $(q+1)$-th primitive root of unity and $\langle\theta \rangle$ be the cyclic group generated by $\theta$. Choose $\beta_{1}, \ldots , \beta_{t} \in \mathbb{F}_{q^{2}}^{*}$
such that $\{\beta_{s} \langle\theta \rangle\}^{t}_{s
=1}$ represent distinct cosets of $\mathbb{F}_{q^{2}}^{*}/\langle \theta \rangle$. Let $A_{s}=\beta_{s} \langle\theta \rangle$ and $A=(\bigcup^{t}_{s
=1}A_{s}) \bigcup \{0\}$. Label the elements of the set
$A$ by $a_{1}, a_{2}, \ldots , a_{n-1}, a_{n}=0$. We first calculate $w_{i}=\prod_{1 \leq j \leq n, j \neq i}(a_{i}-a_{j})^{-1}$, for all $1 \leq i \leq n$.

\begin{lemma}
\[ \prod_{i=1}^{n-1}(a_{n}-a_{i})=(-1)^{n-1+qt}\prod_{s=1}^{t}\beta_{s}^{q+1}. \]
In particular, $\prod_{i=1}^{n-1}(a_{n}-a_{i}) \in \mathbb{F}_{q}$.
\end{lemma}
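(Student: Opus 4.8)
The plan is to exploit the fact that the last node is $a_n=0$. Since $a_n-a_i=-a_i$ for each $i$, the product collapses immediately to
$\prod_{i=1}^{n-1}(a_n-a_i)=(-1)^{n-1}\prod_{i=1}^{n-1}a_i$,
so the task reduces to evaluating the product of the nonzero nodes. Those nodes $a_1,\dots,a_{n-1}$ are precisely the disjoint union $\bigcup_{s=1}^{t}A_s$ of the $t$ cosets $A_s=\beta_s\langle\theta\rangle$, each of size $q+1$. Hence I would factor $\prod_{i=1}^{n-1}a_i=\prod_{s=1}^{t}\prod_{a\in A_s}a$ and thereby reduce everything to a single coset product.

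Next I would compute the product over one coset. Writing $A_s=\{\beta_s\theta^{j}:0\le j\le q\}$, each coset contributes $\prod_{a\in A_s}a=\beta_s^{q+1}\prod_{j=0}^{q}\theta^{j}$. The remaining factor $\prod_{j=0}^{q}\theta^{j}$ is the product of all $(q+1)$-th roots of unity, i.e.\ the product of the roots of $x^{q+1}-1$; by Vieta's formula this equals $(-1)^{q+1}\cdot(-1)=(-1)^{q}$. (Equivalently $\prod_{j=0}^{q}\theta^{j}=\theta^{q(q+1)/2}$, which one verifies equals $(-1)^q$ by separating the cases $q$ odd and $q$ even, using $\theta^{(q+1)/2}=-1$ in the former.) Substituting back, each coset yields $(-1)^{q}\beta_s^{q+1}$, so $\prod_{i=1}^{n-1}a_i=(-1)^{qt}\prod_{s=1}^{t}\beta_s^{q+1}$, and multiplying by the prefactor $(-1)^{n-1}$ gives the claimed $(-1)^{n-1+qt}\prod_{s=1}^{t}\beta_s^{q+1}$.

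For the final assertion that this quantity lies in $\mathbb{F}_{q}$, I would note that for any $\beta\in\mathbb{F}_{q^2}^{*}$ the power $\beta^{q+1}$ is the norm $N_{\mathbb{F}_{q^2}/\mathbb{F}_q}(\beta)$: indeed $(\beta^{q+1})^{q}=\beta^{q^2+q}=\beta\cdot\beta^{q}=\beta^{q+1}$ since $\beta^{q^2}=\beta$, so $\beta^{q+1}$ is fixed by the Frobenius $x\mapsto x^{q}$ and therefore lies in $\mathbb{F}_q$. As the sign $(-1)^{n-1+qt}$ trivially lies in $\mathbb{F}_q$, the whole expression is a product of elements of $\mathbb{F}_q$ and hence lies in $\mathbb{F}_q$.

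The only delicate point I anticipate is the sign bookkeeping: establishing $\prod_{j=0}^{q}\theta^{j}=(-1)^{q}$ correctly in both even and odd characteristic, and then combining this with the $(-1)^{n-1}$ from extracting the minus signs and with the $t$-fold product over the cosets to land on the exponent $n-1+qt$. Everything else is a direct factorization using the coset structure, with no real obstruction.
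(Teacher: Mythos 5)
Your proposal is correct and follows essentially the same route as the paper: pull out $(-1)^{n-1}$ using $a_n=0$, factor the remaining product over the cosets $A_s$, evaluate each coset product as $\theta^{q(q+1)/2}\beta_s^{q+1}=(-1)^q\beta_s^{q+1}$, and conclude membership in $\mathbb{F}_q$ from $\beta_s^{q+1}$ being a norm. Your treatment is merely a bit more explicit than the paper's about verifying $\theta^{q(q+1)/2}=(-1)^q$ in both parities of $q$.
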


\begin{proof}
\begin{eqnarray*}
% \nonumber to remove numbering (before each equation)
  \prod_{i=1}^{n-1}(a_{n}-a_{i}) &=& (-1)^{n-1}\prod_{i=1}^{n-1}a_{i}=(-1)^{n-1}\prod_{s=1}^{t}\prod_{x_{s} \in A_{s}}x_{s} \\
    &=& (-1)^{n-1}\prod_{s=1}^{t}(\theta^{\frac{q(q+1)}{2}}\beta_{s}^{q+1})=(-1)^{n-1+qt}\prod_{s=1}^{t}\beta_{s}^{q+1}.
\end{eqnarray*}
Note that each $\beta_{s}^{q+1} \in \mathbb{F}_{q}$. The lemma follows.
\end{proof}

Note that
\begin{equation}\label{3}
  \prod_{0 \leq \ell \leq q}(x-\theta^{\ell})=x^{q+1}-1
\end{equation}
and for $0 \leq m \leq q$
\[ \prod_{0 \leq \ell \leq q, \ell \neq m}(x-\theta^{\ell})=\frac{x^{q+1}-1}{x-\theta^{m}}=\sum_{i=0}^{q}x^{i}\theta^{m(q-i)}. \]
So
\begin{equation}\label{4}
  \prod_{0 \leq \ell \leq q, \ell \neq m}(\theta^{m}-\theta^{\ell})=\sum_{i=0}^{q}(\theta^{m})^{i}\theta^{m(q-i)}=(q+1)\theta^{qm}=\theta^{qm}.
\end{equation}
\begin{lemma}
Given $1 \leq i \leq n-1$. Suppose that $a_{i} \in A_{r}$, i.e., $a_{i}=\beta_{r}\theta^{m}$ for some $1 \leq r \leq t$ and $0 \leq m \leq q$. Then,
\[ \prod_{1 \leq j \leq n, j \neq i}(a_{i}-a_{j})=a_{i}^{q+1}\prod_{1 \leq s \leq t, s \neq r}(\beta_{r}^{q+1}-\beta_{s}^{q+1}). \]
In particular, $\prod_{1 \leq j \leq n, j \neq i}(a_{i}-a_{j}) \in \mathbb{F}_{q}$.
\end{lemma}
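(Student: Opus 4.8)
The plan is to evaluate the product $\prod_{1 \le j \le n,\, j \neq i}(a_i - a_j)$ by partitioning the index set $\{1,\ldots,n\}\setminus\{i\}$ according to which piece of $A$ the point $a_j$ lives in. Since $A = (\bigcup_{s=1}^{t} A_s)\cup\{0\}$ and $a_i = \beta_r\theta^m \in A_r$, the factors split into three natural groups: the single factor coming from $a_n = 0$, the factors coming from the remaining points of the same coset $A_r$, and the factors coming from each of the other cosets $A_s$ with $s \neq r$. Concretely I would write
$$\prod_{j \neq i}(a_i - a_j) = a_i \cdot \Big(\prod_{0 \le \ell \le q,\, \ell \neq m}(\beta_r\theta^m - \beta_r\theta^\ell)\Big) \cdot \prod_{1 \le s \le t,\, s \neq r}\Big(\prod_{0 \le \ell \le q}(\beta_r\theta^m - \beta_s\theta^\ell)\Big)$$
and treat each bracketed product in turn.

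For the same-coset group I would factor $\beta_r$ out of each of the $q$ terms to obtain $\beta_r^{q}\prod_{\ell \neq m}(\theta^m - \theta^\ell)$, which equals $\beta_r^{q}\theta^{qm}$ by Eq.~(4). For each of the other cosets I would factor $\beta_s$ out of every one of the $q+1$ terms, turning the inner product into $\beta_s^{q+1}\prod_{0 \le \ell \le q}\big((\beta_r\theta^m/\beta_s) - \theta^\ell\big)$; applying the factorization in Eq.~(3) at $x = \beta_r\theta^m/\beta_s$ gives $\beta_s^{q+1}\big((\beta_r\theta^m/\beta_s)^{q+1} - 1\big)$. The crucial simplification is that $\theta^{q+1} = 1$ (as $\theta$ is a $(q+1)$-th root of unity), so $(\beta_r\theta^m/\beta_s)^{q+1} = \beta_r^{q+1}/\beta_s^{q+1}$, and the inner product collapses to $\beta_r^{q+1} - \beta_s^{q+1}$.

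Combining the three groups yields $\prod_{j \neq i}(a_i - a_j) = a_i \cdot \beta_r^{q}\theta^{qm} \cdot \prod_{s \neq r}(\beta_r^{q+1} - \beta_s^{q+1})$. Substituting $a_i = \beta_r\theta^m$ and using $\theta^{q+1}=1$ once more, the prefactor becomes $\beta_r\theta^m \cdot \beta_r^{q}\theta^{qm} = \beta_r^{q+1}\theta^{m(q+1)} = \beta_r^{q+1} = a_i^{q+1}$, which is exactly the claimed identity. For the final assertion I would note that each $\beta^{q+1}$ is a norm from $\mathbb{F}_{q^2}$ to $\mathbb{F}_q$ — equivalently, $(\beta^{q+1})^{q} = \beta^{q^2+q} = \beta^{q+1}$ because $\beta^{q^2} = \beta$ — so $a_i^{q+1} = \beta_r^{q+1} \in \mathbb{F}_q$ and every factor $\beta_r^{q+1} - \beta_s^{q+1} \in \mathbb{F}_q$, whence the whole product lies in $\mathbb{F}_q$. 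The only place demanding care is the bookkeeping in the other-coset group: correctly counting the $q+1$ factors and repeatedly invoking $\theta^{q+1}=1$ so that the $\theta^m$ contributions wash out; everything else is a direct application of Eqs.~(3) and~(4).
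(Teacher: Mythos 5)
Your proposal is correct and follows essentially the same route as the paper: the same three-way split of the factors (the zero element, the remaining points of $A_r$, and the other cosets $A_s$), the same applications of Eqs.~(3) and~(4), and the same norm argument $\beta^{q+1}\in\mathbb{F}_q$ for the final assertion. The only cosmetic difference is that the paper records the same-coset product as $a_i^{q}$ directly rather than carrying $\beta_r^{q}\theta^{qm}$ forward and simplifying at the end.
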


\begin{proof}
\[\prod_{1 \leq j \leq n, j \neq i}(a_{i}-a_{j})=a_{i}\prod_{x_{r} \in A_{r}, x_{r} \neq a_{i}}(a_{i}-x_{r})\prod_{1 \leq s \leq t, s \neq r}\prod_{x_{s} \in A_{s}}(a_{i}-x_{s}). \]
From Eq. (4), we have
\[\prod_{x_{r} \in A_{r}, x_{r} \neq a_{i}}(a_{i}-x_{r})=\prod_{0 \leq \ell \leq q, \ell \neq m}(\beta_{r}\theta^{m}-\beta_{r}\theta^{\ell})=\beta_{r}^{q}\theta^{qm}=a_{i}^{q}. \]
From Eq. (3), we have
\[\prod_{x_{s} \in A_{s}}(a_{i}-x_{s})=\prod_{0 \leq \ell \leq q}(\beta_{r}\theta^{m}-\beta_{s}\theta^{\ell})=\beta_{s}^{q+1}((\frac{\beta_{r}\theta^{m}}{\beta_{s}})^{q+1}-1)=\beta_{r}^{q+1}-\beta_{s}^{q+1}. \]
The first conclusion then follows. Since $a_{i}^{q+1}, \beta_{s}^{q+1} \in \mathbb{F}_{q}$, the second conclusion is also proved.
\end{proof}

From Lemmas 10 and 11, we know that for all $1 \leq i \leq n,$
\[w_{i}=\prod_{1 \leq j \leq n, j \neq i}(a_{i}-a_{j})^{-1} \in \mathbb{F}_{q}.\]
We let $\gamma_{i} \in \mathbb{F}_{q^{2}}^{*}$ such that
\begin{equation}\label{5}
  \gamma_{i}^{q+1}=-w_{i}.
\end{equation}

Before giving our second construction, we need another simple lemma.
\begin{lemma}
For any integer $\ell \geq 2$ and finite field $F$, there exists a monic polynomial $m(x) \in F[x]$ of degree $\ell$ such that $m(a) \neq 0 $ for all $a \in F$.
\end{lemma}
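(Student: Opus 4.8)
The plan is to reduce the statement to the existence of a monic \emph{irreducible} polynomial of the prescribed degree. Write $\rho = |F|$. The key observation is that a monic polynomial $m(x) \in F[x]$ satisfies $m(a) \neq 0$ for every $a \in F$ precisely when $m(x)$ has no linear factor over $F$; in particular, any monic irreducible polynomial of degree $\ell \geq 2$ automatically does the job, since a root $a \in F$ would force $(x-a) \mid m(x)$, contradicting irreducibility (as $\deg m = \ell \geq 2$). Thus it suffices to produce one monic irreducible polynomial of degree exactly $\ell$ over $F$.

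First I would recall the classical fact that such a polynomial exists for every $\ell \geq 1$. Concretely, one passes to the extension field $\mathbb{F}_{\rho^{\ell}} \supseteq F$ and exhibits an element $\xi \in \mathbb{F}_{\rho^{\ell}}$ whose minimal polynomial over $F$ has degree $\ell$; that minimal polynomial is then monic, irreducible, and of degree $\ell$. The existence of such a $\xi$ follows from a counting argument: the elements lying in some proper subfield $\mathbb{F}_{\rho^{d}} \subsetneq \mathbb{F}_{\rho^{\ell}}$ (with $d$ a proper divisor of $\ell$) number at most $\sum_{d \mid \ell,\, d < \ell} \rho^{d} < \rho^{\ell}$, so some $\xi \in \mathbb{F}_{\rho^{\ell}}$ has degree exactly $\ell$. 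Equivalently, one may invoke Gauss's formula $\frac{1}{\ell}\sum_{d \mid \ell}\mu(d)\rho^{\ell/d} > 0$ for the number of monic irreducibles of degree $\ell$.

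I expect the only point requiring care to be this existence step; once a degree-$\ell$ irreducible is in hand, the conclusion is immediate from the observation above, and no hypothesis beyond $\ell \geq 2$ is used. This irreducible-polynomial argument is the cleanest, and is the one I would present.

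For a fully self-contained alternative that avoids quoting the irreducible count, I would instead split on the size of $\ell$ relative to $\rho$. When $\ell \geq \rho$, the shift $m(x) = (x^{\rho} - x)\,x^{\ell - \rho} + c = x^{\ell} - x^{\ell - \rho + 1} + c$ with any $c \in F^{*}$ is monic of degree $\ell$ and satisfies $m(a) = c \neq 0$ for all $a \in F$, because $x^{\rho} - x = \prod_{a \in F}(x-a)$ vanishes identically on $F$. The remaining range $2 \leq \ell < \rho$ is the genuinely delicate case; here one can count root-free monic polynomials directly, using the factorization generating function $R(z) = (1-z)^{\rho}/(1-\rho z)$, whose degree-$\ell$ coefficient $\sum_{j=0}^{\ell}(-1)^{j}\binom{\rho}{j}\rho^{\ell-j}$ is positive for $\ell \geq 2$, as follows by induction on $\ell$ from the recurrence $R_{\ell} = \rho R_{\ell-1} + (-1)^{\ell}\binom{\rho}{\ell}$ with base case $R_{2} = \binom{\rho}{2} > 0$. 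Either route completes the proof.
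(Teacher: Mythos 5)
Your main argument is correct and is essentially the paper's proof: both reduce the claim to the existence of a monic irreducible polynomial of degree $\ell$ over $F$ (the paper takes the minimal polynomial of a primitive element of the degree-$\ell$ extension, you take the minimal polynomial of any element of degree exactly $\ell$, justified by a subfield count), and both then observe that irreducibility together with $\ell \geq 2$ rules out linear factors, hence roots in $F$. The self-contained alternative you sketch is extra and not needed; the first route is exactly what the paper does.
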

\begin{proof}
Let $E$ be the extension field of $F$ with degree $\ell$, and $\alpha$ be a primitive element of $E$. Then the minimal polynomial $m(x)$ of $\alpha$ is  an irreducible polynomial over $F[x]$ of degree $\ell$. Since $\ell \geq 2$, $(x-a) \nmid m(x)$ for all $a \in F$, i.e., $m(a) \neq 0 $ for all $a \in F$.
\end{proof}

Now, we provide a construction of Hermitian self-orthogonal MDS codes by Lemma 7.
\begin{proposition}
Let $p$ be a prime and let $q$ be a power of $p$. Then,
\begin{description}
  \item[(1)] for any $1 \leq t \leq q-1$ and $1 \leq k \leq t+1$ with $(t,k) \neq (q-1, q-1)$, there exists a $q^{2}$-ary $[t(q+1)+2, k, t(q+1)+3-k]$-Hermitian self-orthogonal MDS code;
  \item[(2)] if $p \neq 2$, for $(t,k) = (q-1, q-1)$, there still exists a $q^{2}$-ary $[q^{2}+1, q-1, q^{2}-q+3]$-Hermitian self-orthogonal MDS code.
\end{description}
\end{proposition}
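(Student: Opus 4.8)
The plan is to combine the extended‑GRS Hermitian–duality criterion (Lemma 7) with the root‑free polynomial of Lemma 12. In each case I would pick a vector $\mathbf v$ whose entries satisfy $v_i^{q+1}\in\mathbb{F}_q^{*}$ — so that $v_i$ exists, since the norm map $\mathbb{F}_{q^2}^{*}\to\mathbb{F}_q^{*}$ is surjective — and then, for every $f$ with $\deg f\le k-1$, exhibit a polynomial $g$ of degree $\le n-k$ realizing the identity in Lemma 7. Recall $n=t(q+1)+1$, so $GRS_k(\mathbf a,\mathbf v,\infty)$ has length $n+1=t(q+1)+2$ and, being an extended GRS code, is automatically $[n+1,k,n-k+2]$‑MDS with $n-k+2=t(q+1)+3-k$; hence Hermitian self‑orthogonality is the only thing to establish. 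Throughout I will use that $a_i^{q+1}=\beta_s^{q+1}\in\mathbb{F}_q^{*}$ for $a_i\in A_s$, that $0^{q+1}=0$, and that $w_i\in\mathbb{F}_q^{*}$ by Lemmas 10 and 11.

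For part (1) I would set $m(x)=\widehat m(x^{q+1})$ with $\widehat m\in\mathbb{F}_q[x]$ monic of degree $t+1-k$, and take $v_i^{q+1}=-w_i\,m(a_i)$ and $g(x)=-m(x)f(x)^{q}$. Since $a_i^{q+1}\in\mathbb{F}_q$ and $\widehat m\in\mathbb{F}_q[x]$, each $m(a_i)=\widehat m(a_i^{q+1})$ lies in $\mathbb{F}_q$, so $v_i^{q+1}\in\mathbb{F}_q^{*}$ once $m(a_i)\ne0$. A degree count gives $\deg g=(q+1)(t+1-k)+q(k-1)=n-k$, and comparing the coefficient of $x^{n-k}$ (where only the leading term $f_{k-1}^{q}x^{(k-1)q}$ of $f^{q}$ meets the monic leading term of $m$) yields exactly the relation $f_{k-1}^{q}=-g_{n-k}$ of Lemma 7, while the $n$ equalities $v_i^{q+1}f^{q}(a_i)=w_i g(a_i)$ hold by construction. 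So the sole requirement is that $\widehat m$ avoid the $\le t+1$ points $\{0,\beta_1^{q+1},\dots,\beta_t^{q+1}\}\subseteq\mathbb{F}_q$: for $t+1-k\ge2$ this is Lemma 12 (take $\widehat m$ with no root in $\mathbb{F}_q$ at all), and for $t+1-k=1$ a linear $\widehat m=x-\delta$ with $\delta$ outside that set works, which is possible precisely when $t+1<q$. The one remaining case $k=t=q-1$, where the set fills $\mathbb{F}_q$, is exactly $(t,k)=(q-1,q-1)$ and is deferred to part (2).

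For part (2) the evaluation set is forced to be $A=\mathbb{F}_{q^2}$ (so $n=q^2$, $a_n=0$) and $w_i=-1$. Here I abandon the product form of $g$ and verify self‑orthogonality directly via vanishing of all generator‑row Hermitian inner products, which reduces to the power‑sum conditions $\sum_{i=1}^{n}v_i^{q+1}a_i^{\ell+q\ell'}=-\delta_{\ell,k-1}\delta_{\ell',k-1}$ for $0\le\ell,\ell'\le q-2$. A constant value of $v_i^{q+1}$ makes every such sum vanish except at the top pair $\ell=\ell'=q-2$ (exponent $q^2-q-2$), so one must correct it so that, in the expansion $\rho(a)=\sum_m\hat\rho_m a^{m}$, only $\hat\rho_{q+1}$ survives. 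The correct choice is
\[
v_i^{q+1}=\tfrac12\bigl(1-(a_i^{q}-a_i)^{2}\bigr),
\]
which is $\mathbb{F}_q$‑valued because $(a^{q}-a)^{q}=-(a^{q}-a)$; its expansion $-\tfrac12 a^{2q}+a^{q+1}-\tfrac12 a^{2}+\tfrac12$ carries coefficient $1$ at $x^{q+1}$ (using $-\tfrac12\cdot(-2)=1$) and is otherwise supported on $x^{2}$, $x^{2q}$, $x^{0}$, all exponents with a zero base‑$q$ digit; a short power‑sum check then confirms every condition above.

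The main obstacle is exactly this construction in part (2): one needs an $\mathbb{F}_q$‑valued $\rho$ that is nowhere zero on $\mathbb{F}_{q^2}$ yet whose expansion satisfies $\hat\rho_{q+1}=1$ and $\hat\rho_m=0$ for every $m$ both of whose base‑$q$ digits are nonzero. This rigidity defeats the naive candidates, since any $\rho$ of the form $(a+c)^{q+1}$ (or $h(a)^{q+1}$ with $\deg h\ge2$) either vanishes somewhere or pollutes a forbidden exponent. The displayed $\rho$ threads the needle because $(a^{q}-a)^{2}$ equals $0$ exactly on $\mathbb{F}_q$ and takes only non‑square values of $\mathbb{F}_q^{*}$ off $\mathbb{F}_q$, so $\tfrac12\bigl(1-(a^{q}-a)^{2}\bigr)$ never hits $0$ (as $1$ is a square). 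This is where $p\ne2$ is indispensable: for $q$ even the cross term $-2a^{q+1}$ in $(a^{q}-a)^{2}$ disappears, the coefficient at $x^{q+1}$ collapses, and the square/non‑square dichotomy degenerates — precisely the excluded triple $(p,t,d)=(2,q-1,q)$.
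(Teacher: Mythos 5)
Your proposal is correct. Part (1) is essentially the paper's argument in light disguise: the paper chooses a monic $m\in\mathbb{F}_{q^2}[x]$ of degree $\ell=t+1-k$ avoiding the $n$ evaluation points themselves (Lemma 12 over $\mathbb{F}_{q^2}$ when $\ell\ge 2$, a point outside the evaluation set when $\ell=1$), sets $v_i=m(a_i)\gamma_i$ and $g=-m^{q+1}f^{q}$, whereas you push everything down to $\mathbb{F}_q$ via $m(x)=\widehat m(x^{q+1})$ and need only avoid the $t+1$ norms $\{0,\beta_1^{q+1},\dots,\beta_t^{q+1}\}$; the degree count, the identity $g_{n-k}=-f_{k-1}^{q}$, and the breakdown at $(t,k)=(q-1,q-1)$ are the same in both versions. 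Part (2) is where you genuinely diverge. The paper stays inside the Lemma 7 framework: it takes $m(x)=x^{q}+x-\pi$ with $\pi\notin\mathbb{F}_q$ (never zero on $\mathbb{F}_{q^2}$), puts $v_i=\theta m(a_i)\gamma_i$ with $\theta^{q+1}=\tfrac12$, and builds $g$ explicitly by reducing $-\tfrac12 m^{q+1}f^{q}$ using $a_i^{2q}a_i^{q^2-2q}=a_i$, the factor $\tfrac12$ cancelling the coefficient $2$ of $a^{q+1}$ in $m(a)^{q+1}$. You instead prescribe the weight $v_i^{q+1}=\tfrac12\bigl(1-(a_i^{q}-a_i)^{2}\bigr)$ and verify the power-sum conditions on the generator rows directly; this is a valid equivalent criterion, your nonvanishing argument is sound (if $b^{q}=-b$ and $b\neq 0$ then $(b^{2})^{(q-1)/2}=b^{q-1}=-1$, so $(a^{q}-a)^{2}$ is a non-square while $1$ is a square), and the digit analysis correctly isolates the single surviving exponent $q^{2}-1$ at $\ell=\ell'=k-1$. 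It is worth noting that the paper's weight $\tfrac12(a^{q}+a-\pi)^{q+1}$ satisfies exactly the structural condition you articulate (coefficient $1$ at $x^{q+1}$, all other support on exponents with a zero base-$q$ digit), so the two constructions are siblings: yours trades the paper's automatic nonvanishing (a norm of a nonzero element) for a quadratic-residue argument, and trades the explicit polynomial $g$ for power sums. Both correctly locate the obstruction at $p=2$.
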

\begin{proof}
Let $n=t(q+1)+1$, and $a_{1}, a_{2}, \ldots, a_{n}$ be defined as the beginning of this subsection. Let $\ell = t+1-k$.

We first to prove Part (1) which consider the case $(t,k) \neq (q-1, q-1)$.
 \begin{description}
  \item[(i)] If $k \leq t-1$, then $\ell \geq 2$. From Lemma 12, there exists a monic polynomial $m(x) \in \mathbb{F}_{q^{2}}[x]$ of degree $\ell$, such that $m(a_{i}) \neq 0$ for all $1 \leq i \leq n$;
  \item[(ii)] If $k=t$, then $\ell=1$ and $t \neq q-1$ since $(t, k) \neq (q-1, q-1)$. Thus $n=t(q+1)+1 < q^{2}$ and we may choose an element $\alpha \in \mathbb{F}_{q^{2}} \backslash \{a_{1}, a_{2}, \cdots, a_{n} \}$. Let $m(x)=x-\alpha$, then $m(a_{i}) \neq 0$ for all $1 \leq i \leq n$;
  \item[(iii)] If $k=t+1$, then $\ell=0$. We let $m(x)=1$.
\end{description}
 In conclusion, there always exists a monic polynomial $m(x) \in \mathbb{F}_{q^{2}}[x]$ of degree $\ell$, such that $m(a_{i}) \neq 0$ for all $1 \leq i \leq n$. Let $v_{i}=m(a_{i})\gamma_{i}\neq 0$, where $\gamma_{i}$ is defined by Eq. (5). Put $\textbf{a}=(a_{1}, a_{2}, \cdots, a_{n})$ and $\textbf{v}=(v_{1}, v_{2}, \cdots, v_{n})$, we consider the extended GRS code $GRS_{k}(\textbf{a}, \textbf{v}, \infty)$.

Suppose that $\textbf{c}=(v_{1}f(a_{1}), v_{2}f(a_{2}), \ldots, v_{n}f(a_{n}), f_{k-1})$ is a codeword of $GRS_{k}(\textbf{a}, \textbf{v}, \infty)$, i.e., $\deg(f) \leq k-1$ and $f_{k-1}$ is the coefficient of $x^{k-1}$ in $f(x)$. Let $g(x)=-m^{q+1}(x)f^{q}(x)$. Then
\[v_{i}^{q+1}f^{q}(a_{i})=((m(a_{i})\gamma_{i})^{q+1}f^{q}(a_{i})=w_{i}g(a_{i}).\]
If $\deg(f) < k-1$, i.e., $f_{k-1}=0$, then $\deg(g)=\ell(q+1)+q\deg(f)=(t+1-k)(q+1)+q\deg(f)=n-k-(k-1-\deg(f))q < n-k$, thus $g_{n-k}=0=-f_{k-1}^{q}$; If $\deg(f) = k-1$, then $\deg(g)=n-k$, thus $g_{n-k}=-f_{k-1}^{q}$. Thus we always have
\[f_{k-1}^{q}=-g_{n-k}.\]
Then by Lemma 7, we have $\textbf{c} \in GRS_{k}(\textbf{a}, \textbf{v}, \infty)^{\perp_{H}}.$ Thus $GRS_{k}(\textbf{a}, \textbf{v}, \infty) \subseteq GRS_{k}(\textbf{a}, \textbf{v}, \infty)^{\perp_{H}},$ i.e., $GRS_{k}(\textbf{a}, \textbf{v}, \infty)$ is a $q^{2}$-ary $[t(q+1)+2, k, t(q+1)+3-k]$-Hermitian self-orthogonal MDS code. The Part (1) is proved.

Next, we suppose that $p \neq 2$ and $(t, k)=(q-1, q-1)$. Then $n=q^{2}$ and $\{a_{1}, a_{2}, \ldots, a_{n}\}=\mathbb{F}_{q^{2}}$. Choose an element $\pi \in \mathbb{F}_{q^{2}}\backslash \mathbb{F}_{q}$. Let $m(x)=x^{q}+x-\pi$. Since for any $\alpha \in \mathbb{F}_{q^{2}}$. $\alpha^{q}+\alpha \in \mathbb{F}_{q}$, we have
\[m(a_{i}) \neq 0 \textnormal{ for all }1 \leq i \leq n.\]
Note that

\begin{eqnarray}
% \nonumber to remove numbering (before each equation)
 \nonumber m^{q+1}(a_{i}) &=& (a_{i}^{q}+a_{i}-\pi)^{q+1}=(a_{i}^{q}+a_{i}-\pi)^{q}(a_{i}^{q}+a_{i}-\pi) \\
    &=& a_{i}^{2q}+2a_{i}^{q+1}-(\pi+\pi^{q})(a_{i}^{q}+a_{i})+\pi^{q+1}.
\end{eqnarray}
 Now for any polynomial $f(x) \in  \mathbb{F}_{q^{2}}[x]$ with $\deg(f) \leq k-1=q-2$.
\begin{description}
  \item[(i)] If $\deg(f(x))=q-2$. Then $\deg(f^{q}(x))=q^{2}-2q.$ Note that $a_{i}^{2q}a_{i}^{q^{2}-2q}=a_{i}$. Then there exists a polynomial $g(x) \in \mathbb{F}_{q^{2}}[x]$ with $\deg(g)=(q^{2}-2q)+(q+1)=q^{2}-q+1=n-k$, such that $g(a_{i})=-\frac{1}{2}m^{q+1}(a_{i})f^{q}(a_{i})$  ($\frac{1}{2}$ is the element in $\mathbb{F}_{q}$ such that $2\cdot \frac{1}{2}=1$) and $g_{n-k}=-f^{q}_{k-1};$
  \item[(ii)] If $\deg(f) \leq q-3$. Then there exists a polynomial $g(x) \in \mathbb{F}_{q^{2}}[x]$ with $\deg(g)=q\deg(f)+2q \leq q^{2}-q < n-k$ such that $g(a_{i})=-\frac{1}{2}m^{q+1}(a_{i})f^{q}(a_{i})$ and $g_{n-k}=-f^{q}_{k-1}=0.$
\end{description}

Let $\theta \in \mathbb{F}_{q^{2}}$ such that $\theta^{q+1}=\frac{1}{2}$, and let $v_{i}=\theta m(a_{i})\gamma_{i}$, where $\gamma_{i}$ is defined by Eq. (5). Then from the above discussion, for any codeword of $GRS_{k}(\textbf{a}, \textbf{v}, \infty)$, $\textbf{c}=(v_{1}f(a_{1}), v_{2}f(a_{2}), \ldots, v_{n}f(a_{n}), f_{k-1})$, where $\deg(f) \leq k-1=q-2$, there always there exists a polynomial  $g(x) \in \mathbb{F}_{q^{2}}[x]$ with $\deg(g) \leq n-k$, such that
\[w_{i}g(a_{i})=-\gamma_{i}^{q+1}(-\frac{1}{2}m^{q+1}(a_{i})f^{q}(a_{i}))=v_{i}^{q+1}f^{q}(a_{i})\]
and
\[g_{n-k}=-f^{q}_{k-1}.\]
Then by Lemma 7, we have $\textbf{c} \in GRS_{k}(\textbf{a}, \textbf{v}, \infty)^{\perp_{H}}.$ Thus $GRS_{k}(\textbf{a}, \textbf{v}, \infty) \subseteq GRS_{k}(\textbf{a}, \textbf{v}, \infty)^{\perp_{H}},$ i.e., $GRS_{k}(\textbf{a}, \textbf{v}, \infty)$ is a $q^{2}$-ary $[q^{2}+1, q-1, q^{2}-q+3]$-Hermitian self-orthogonal MDS code. The Part (2) is also proved.

Our proposition follows.
\end{proof}

\begin{remark}
\begin{description}
  \item[(1)] When $(t, k) = (q-1, q-1)$, we have $\{a_{1}, a_{2}, \ldots, a_{n}\}=\mathbb{F}_{q^{2}}$ and $\ell=t-k+1=1$. Then for any $m(x) \in \mathbb{F}_{q^{2}}[x]$ of degree 1, there always exists some $1 \leq i \leq n$ such that $m(a_{i})=0$. At this moment, $v_{i}=m(a_{i})\gamma_{i}=0$. Thus the method in the proof of Part (1) is not available for the Part (2).
  \item[(2)] When $(p, t, k) = (2, q-1, q-1)$, the term $2a_{i}^{q+1}$ in Eq. (6) will vanishes in $\mathbb{F}_{q}$. Thus the method in the proof of Part (2) is not available for this case.
\end{description}

\end{remark}

We give the second construction in the following theorem, which immediately follows by Proposition 1 and Lemma 3.
\begin{theorem}
Let $p$ be a prime and let $q$ be a power of $p$. Then, for any $1 \leq t \leq q-1$ and $2 \leq d \leq t+2$ with $(p, t, d) \neq (2, q-1, q)$, there exists a $q$-ary $[[t(q+1)+2, t(q+1)-2d+4, d]]$-quantum MDS code.
\end{theorem}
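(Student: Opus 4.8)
The plan is to derive Theorem 2 directly from Proposition 1 and the Hermitian construction in Lemma 3, so that the entire argument reduces to parameter matching together with a small amount of case bookkeeping. The natural first step is the substitution $k = d-1$, which translates the claimed range $2 \leq d \leq t+2$ into $1 \leq k \leq t+1$, and turns the excluded boundary case $(t,d) = (q-1,q)$ into $(t,k) = (q-1,q-1)$. This substitution is forced by the form of Lemma 3: a $q^2$-ary $[n,k,n-k+1]$ Hermitian self-orthogonal MDS code is sent to a $q$-ary $[[n, n-2k, k+1]]$ quantum MDS code, whose minimum distance $k+1$ is exactly what we want to equal $d$.

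Next I would feed the Hermitian self-orthogonal MDS codes produced by Proposition 1 into Lemma 3. With $n = t(q+1)+2$ and $k = d-1$, Lemma 3 yields a $q$-ary quantum MDS code of length $t(q+1)+2$, dimension $t(q+1)+2 - 2(d-1) = t(q+1) - 2d + 4$, and minimum distance $(d-1)+1 = d$. These are precisely the claimed parameters $[[t(q+1)+2, t(q+1)-2d+4, d]]_q$, so no further computation of the quantum parameters is needed beyond this substitution.

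Finally I would check that the case split in Proposition 1 exactly covers the hypotheses of the theorem. For every $(t,d)$ with $1 \leq t \leq q-1$, $2 \leq d \leq t+2$, and $(t,d) \neq (q-1,q)$ — equivalently $1 \leq k \leq t+1$ with $(t,k) \neq (q-1,q-1)$ — Part (1) of Proposition 1 supplies the required $[t(q+1)+2, k, t(q+1)+3-k]$ Hermitian self-orthogonal MDS code. For the single boundary case $(t,d) = (q-1,q)$, i.e. $(t,k) = (q-1,q-1)$, observe that $t(q+1)+2 = q^2+1$; when $p \neq 2$, Part (2) provides a $q^2$-ary $[q^2+1, q-1, q^2-q+3]$ Hermitian self-orthogonal MDS code, and Lemma 3 applied to it gives the desired $[[q^2+1, q^2-2q+3, q]]_q$ quantum MDS code. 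The only triple left uncovered is $(p,t,d) = (2, q-1, q)$, which is exactly the case excluded in the statement. I expect no genuine obstacle in this argument: all the substantive work — the choice of evaluation points, the computation of the $w_i$ in Lemmas 10 and 11, and the regime-dependent construction of the multiplier polynomial $m(x)$ (including the special $m(x) = x^q + x - \pi$ needed when $p\neq 2$) — has already been carried out in Proposition 1. The mild subtlety here is purely the bookkeeping that confirms Parts (1) and (2) together cover all admissible parameters except the announced exception, and that Lemma 3 converts the distances $n-k+1$ into the advertised quantum distances $d$.
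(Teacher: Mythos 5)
Your proposal is correct and matches the paper exactly: the paper derives Theorem 2 as an immediate consequence of Proposition 1 and Lemma 3 via the substitution $k=d-1$, which is precisely the parameter matching and case bookkeeping you carry out. Your write-up simply makes explicit the details the paper leaves to the reader.
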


\begin{remark}
\begin{description}
  \item[(1)] When $t > \frac{q}{2}-1$, the minimum distance of the quantum code of Theorem 2 may larger than $\frac{q}{2}+1$.
  \item[(2)] Taking $t=q-1$ in Theorem 2, we obtain a family of $q$-ary quantum MDS code of length $q^{2}+1$ and minimum distance $d \leq q+1$ ($d \neq q$ when $q$ is even). Comparing with the results in the references \cite{3,16,25}, our results improve theirs (see Table 1).
\end{description}
\end{remark}

Suppose that $C$ is a Hermitian self-orthogonal $[5, 1, 5]$-MDS code over $\mathbb{F}_{4}$. Let $\textbf{c}=(c_{0}, c_{1}, c_{2}, c_{2}, c_{4})$ be a nonzero codeword of $C$. Since the minimum distance of $C$ is 5 and $C$ is Hermitian self-orthogonal, we have $c_{i} \neq 0$ and $\sum_{i=0}^{4}c_{i}^{3}=0$. Note that $c_{i}^{3}=1$, we obtain that $5=0$, i.e., $1=0$ in $\mathbb{F}_{4}$. This is a contradiction. Thus there is no Hermitian self-orthogonal $[5, 1, 5]$-MDS code over $\mathbb{F}_{4}$. So we can not construct a 2-ary $[[5, 3, 2]]$-quantum MDS code via the Hermitian Construction (Lemma 3). One natural question is
\begin{itemize}
  \item \textbf{Problem} If $q$ is even and larger than 2, does there exist a $q$-ary quantum MDS code of length $q^{2}+1$ and minimum distance $q?$
\end{itemize}

\section{Conclusions}
In this paper, we firstly study some dual properties of classical GRS codes and extended GRS codes. Then, we apply the additive and multiplicative subgroups of $\mathbb{F}_{q^{2}}$ and their cosets to obtain Hermitian self-orthogonal MDS codes. Finally, two new families of quantum MDS codes with flexible parameters are presented from the Hermitian construction. Furthermore, we see that some previous results are special cases of ours and the parameters of some results are also improved. However, when $q$ is even and larger than 2, our method is not available for constructing quantum MDS code of length $q^{2}+1$ and minimum distance $q$. So we expect that this case and more quantum MDS codes with new parameters can be constructed from some other mathematical tools in the future work.

\vskip 3mm \noindent {\bf Acknowledgments} This research is supported by the National Key Basic Research Program of China (Grant No. 2013CB834204), and the National Natural Science Foundation of China (No. 61571243).

%-----------------------------------------------------------

\end{document}